\documentclass{article}

\usepackage[preprint]{neurips_2025}
\usepackage{todonotes}

\newcommand{\lrp}[1]{\left({#1}\right)}

\newcommand{\KLinf}[0]{\operatorname{KL_{inf}}}
\newcommand{\Exp}[1]{\mathbb{E}\lrp{#1}}
\newcommand\numberthis{\addtocounter{equation}{1}\tag{\theequation}}

\usepackage[utf8]{inputenc} 
\usepackage[T1]{fontenc}    
\usepackage{hyperref}       
\usepackage{url}            
\usepackage{booktabs}       
\usepackage{amsfonts}       
\usepackage{nicefrac}       
\usepackage{microtype}      
\usepackage{xcolor}         
\usepackage{microtype}
\usepackage{graphicx}
\usepackage{subfigure}
\usepackage{booktabs} 
\usepackage{amsmath}
\usepackage{amssymb}
\usepackage{mathtools}
\usepackage{amsthm}
\usepackage{algorithmic}
\usepackage{algorithm}
\usepackage{array}
\usepackage{amssymb, amsmath}
\usepackage{textcomp}
\usepackage{url}
\usepackage{verbatim}
\usepackage{graphicx}
\usepackage{amsthm}
\usepackage{cite}
\usepackage{multirow}
\usepackage{multicol}
\usepackage{booktabs}
\usepackage{siunitx}
\usepackage{listings}
\usepackage{dsfont}
\usepackage{amssymb}
\usepackage{booktabs, xcolor, colortbl}
\usepackage{caption}
\usepackage[most]{tcolorbox}

\allowdisplaybreaks[1]

\theoremstyle{plain}
\newtheorem{theorem}{Theorem}[section]

\newtheorem{lemma}[theorem]{Lemma}
\newtheorem{corollary}[theorem]{Corollary}
\theoremstyle{definition}
\newtheorem{definition}[theorem]{Definition}
\newtheorem{assumption}[theorem]{Assumption}
\theoremstyle{remark}
\newtheorem{remark}[theorem]{Remark}
\newtheorem{example}[theorem]{Example}

\title{Regret Tail Characterization of Optimal Bandit Algorithms with Generic Rewards}

\author{%
  Subhodip Panda \\
  Department of ECE\\
  Indian Institute of Science\\
  Bangalore, India \\
  \texttt{subhodipp@iisc.ac.in} \\
  \And
  Shubhada Agrawal\\
  Department of ECE\\
  Indian Institute of Science\\
  Bangalore, India \\
  \texttt{shubhada@iisc.ac.in} \\
}

\begin{document}

\maketitle

\begin{abstract}
We study the tail behavior of regret in stochastic multi-armed bandits for algorithms that are asymptotically optimal in expectation. While minimizing expected regret is the classical objective, recent work shows that even such algorithms can exhibit heavy regret tails, incurring large regret with non-negligible probability. Existing sharp characterizations of regret tails are largely restricted to parametric settings, such as single-parameter exponential families. 

In this work, we extend the $\KLinf$-UCB algorithm of \citet{agrawal21heavytailed} to a broad nonparametric class of reward distributions satisfying mild assumptions, and establish its asymptotic optimality in expectation. We then analyze the tail behavior of its regret and derive a novel upper bound on the regret tail probability. As special cases, our results recover regret-tail guarantees for both bounded-support and heavy-tailed (moment-bounded) bandit models. Moreover, for the special case of finitely-supported reward distributions, our upper bound matches the known lower bound exactly. Our results thus provide a unified and tight characterization of regret tails for asymptotically optimal $\operatorname{KL}$-based UCB algorithms, going beyond parametric models.
\end{abstract}

\section{Introduction}
The multi-armed bandit (MAB) problem is a classical framework for sequential decision-making under uncertainty, in which an algorithm interacts with a set of $K$ arms, each associated with an unknown reward distribution. At each round, the algorithm selects an arm and observes a stochastic reward, with the goal of maximizing the total expected reward till time $T>0$. This objective is typically formalized via the notion of regret, defined as the difference between the cumulative reward of an algorithm over a horizon $T$ and that of an oracle policy that always plays the arm with the largest mean reward. Minimizing expected regret is equivalent to maximizing expected cumulative reward, and serves as the standard performance criterion in bandit problems. 

A dominant focus of the contemporary literature has been on designing algorithms that minimize the \emph{expected} regret. \citet{LaiRobbins85,BURNETAS1996} proposed a fundamental lower bound on the expected cumulative regret suffered by any reasonable algorithm for this setup. Algorithms achieving this lower bound (even in the multiplicative constants) have also been developed for a wide class of reward distributions~\citep{HondaBounded10, honda2011asymptotically, SAgrawal2011TS1, SAgrawal2012TS2, kaufmann2012TS, cappe2013kullback, honda2015SemiBounded, agrawal21heavytailed}. We refer the reader to \citet{lattimore2020bandit,bubeck2012Survey} for a comprehensive treatment of this classical setup.

While expected regret is a canonical performance metric in stochastic bandits, it provides only a coarse summary of algorithmic behavior. A more informative characterization is obtained by studying the distribution of the regret~\citep{Audibert2009UCBV, salomon2011deviations, KalvitZeevi2021WorstCaseBandits}, and in particular its tail behavior. Understanding the upper tail of the regret distribution is crucial for quantifying the probability of rare but severe failure events in which an algorithm may incur considerable regret. Such high-regret outcomes can be especially consequential in applications such as clinical trials, where such events may correspond to exposing a large number of patients to suboptimal treatments. Consequently, controlling heavy regret tails, rather than merely minimizing expected regret, is a key objective in risk-sensitive and safety-critical settings. Thus, a theoretical understanding of regret tail probabilities of existing algorithms can provide a principled avenue for identifying and mitigating the vulnerabilities of these optimal algorithms, and for guiding the design of more robust bandit strategies (see also, \citet{Ashutosh2021RobustBandits}). 

In recent work, \citet{FanGlynn2024Fragility} establish a lower bound on the tail probabilities of the regret for any asymptotically optimal bandit algorithm, that is, one which matches the leading-order term in the regret lower bound, including the exact multiplicative constant. They further provide a detailed analysis of the regret tail upper bound for the KL-UCB algorithm of \citet{cappe2013kullback}, showing that its regret tail matches their lower bound, thereby demonstrating the tightness of these bounds for the single-parameter exponential family (SPEF) of reward distributions. 

While the authors discuss the extensions of their lower bounds to certain non-parametric classes of reward distributions, the regret tail upper bounds for optimized algorithms in these more general settings remained open. In particular, their analysis does not extend to the $\KLinf$-UCB algorithm, which is known to be optimal for finitely-supported distributions \citep[Theorem 2]{cappe2013kullback}, bounded-support distributions \citep[Proposition 4]{agrawal21heavytailed}, as well as for a broad class of heavy-tailed distributions \citep{agrawal21heavytailed}. This motivates the following central question:
\begin{quote} 
\emph{Can we characterize the regret-tail behavior of asymptotically optimal bandit algorithms for broader nonparametric classes of reward distributions?}
\end{quote}

In this work, we address the above question affirmatively. We first show that a natural generalization of the $\KLinf$-UCB algorithm is asymptotically optimal for a very broad class of distributions $\cal L$, introduced later. We then prove the first regret-tail upper bound for this algorithm for arm distributions from $\cal L$, and demonstrate its tightness in different settings. We conclude this section by presenting the key contributions of this work.

\begin{itemize}
    \item We extend the $\KLinf$-UCB algorithm of \citet{agrawal21heavytailed} to a much broader family of reward distributions $\mathcal L$ (to be introduced later), and show that it is asymptotically optimal (Theorem~\ref{theorem-1}). This is a substantial generalization that includes the moment-bounded class studied by the authors, as well as bounded-support distributions as special cases. 
    
    \item Our main technical result is the regret tail upper bound for the generalized $\KLinf$-UCB algorithm (Theorem~\ref{theorem-3}). This is the first regret-tail upper bound for asymptotically optimal bandit algorithms beyond parametric models. As immediate corollaries, we get the upper bounds for the classical settings: \textit{(a)} the moment-bounded heavy-tailed class considered in \citet{agrawal21heavytailed} (Corollary~\ref{cor-2}); and \textit{(b)} the bounded-support reward distributions (Corollary~\ref{cor-3}). 
    
    When the class $\mathcal{L}$ satisfies an additional property called discrimination-equivalence (Definition~\ref{def-5.3}), we can show that the upper bounds that we prove exactly match the regret tail lower bound proposed in \citet{FanGlynn2024Fragility}. 
    


    \item For the special yet nontrivial case of finitely-supported distributions, a class that does not satisfy the discrimination-equivalence property, we provide a much tighter regret tail upper bound (Theorem~\ref{theorem-4}), and show that it matches the lower bound of \citet{FanGlynn2024Fragility}.

\end{itemize}

\noindent\textbf{Organization:}
The rest of the paper is organized as follows. Section~\ref{background} presents the setup and necessary background. In Section~\ref{algorithm}, we present the \emph{Generalized} $\mathrm{KL}_{\inf}$-UCB algorithm and analyze its asymptotic optimality in the sense of expected regret. Section~\ref{fragility} analyzes the regret tail behavior of this optimal UCB algorithm, and presents a tight regret tail characterization under discrimination equivalence. Finally, Section~\ref{bounded-finite-support} provides a refined regret tail analysis for the finitely supported setting, to obtain a tight upper bound, without relying on the discrimination equivalence. We review the relevant literature in detail in Appendix~\ref{sec:lit}.

\section{Preliminaries}\label{background}
We begin by introducing the setup in the next section, followed by the general class $\cal L$ in Section~\ref{sec:calL}. 

\subsection{Multi-arm Bandit Setup}
We consider a bandit problem with $K$-arms indexed by $a \in [K]:=\{1, \ldots, K\}$, with $K \geq 2$. Let \(\mathcal{ P}(\Re)\) denote the collection of all probability measures on \(\Re\). For \(\nu \in\mathcal{P}(\Re)\), let \(m(\nu) = \int\nolimits_{\Re} x d\nu\) denote its mean. For \(\mathcal{M} \subseteq \mathcal{ P}(\Re)\), we denote a collection of bandit instances by \(\mathcal M^K\), a collection of vectors of \(K\) distributions, each from \(\mathcal{M}\). 

Next, given a bandit environment \( {\boldsymbol{\nu}} \in \mathcal M^K\) such that  \({\boldsymbol{\nu}} =  \{\nu_1,\nu_2,\dots,\nu_K\}\), each arm $a \in [K]$ has a reward distribution $\nu_a$ with expected reward $\mu_a = m(\nu_a)$. Let $\mu^* = \max \{\mu_a: a \in [K]\}$ be the expected reward of the optimal arm. We denote the sub-optimality gap of an arm $a$ as $\Delta_a = \mu^* - \mu_a$. Without loss of generality, we assume that arm 1 is the optimal arm, and means are ordered so that \(\mu_1 > \mu_2 > \ldots > \mu_K\) for the rest of the paper. 

At each time $t$, the algorithm selects an arm $A_t \in [K]$ based on the past information, and receives a reward $Y_t$. The number of times each arm~$a$ is pulled till time $t$ is denoted by $N_a(t)\triangleq \sum_{s=1}^{t} \mathbb{I}\{A_s = a\}$. In addition, for each arm $a$ and all rounds $t$ such that $N_a(t) \geq 1$, the empirical reward distribution of arm~$a$ is defined as $\hat{\nu}_a(t) = \frac{1}{N_a(t)} \sum_{s=1}^{t} \delta_{Y_s} \, \mathbb{I}\{A_s = a\}$. Here $\delta_{Y_s}$ denotes the Dirac measure at $Y_s$. The quality of an algorithm $\pi$ is evaluated using the standard notion of \emph{expected regret}, defined next. The expected regret till time $T \geq 1$ is defined as 

\begin{align*}
\mathbb{E}[R(T)] &\triangleq \mathbb{E} \left[ T\mu_1 - \sum_{t=1}^{T} Y_t \right] = \sum_{a=1}^{K} \Delta_a \, \mathbb{E}[N_a(T)].
\end{align*}

Note that the above expectation is with respect to the probability measure $\mathbb{P}^{\pi}_{{\boldsymbol{\nu}}}$ which is induced by the interaction between the algorithm $\pi$ and the bandit environment ${\boldsymbol{\nu}}$. We refer the reader to \citet[Chapter 4]{lattimore2020bandit} for the details of the underlying measure spaces. 

\citet{LaiRobbins85} showed that any algorithm must incur at least logarithmic regret with a precise leading constant, a result later generalized by \citet{BURNETAS1996}. Several algorithms are now known to achieve this bound. The definition of $\mathcal{M}^K$-optimality, introduced below, formalizes this idea by requiring that an algorithm exactly matches this lower bound.
\begin{definition}
An algorithm is said to be  $\mathcal{M}^K$-optimal if for every ${\boldsymbol{\nu}} \in \mathcal{M}^K$ and for each sub-optimal arm $a$, the following holds:
\begin{align}\label{eq:lai-robbins-lower-bound}
    \lim_{T \to \infty} \frac{\mathbb{E}[N_a(T)]}{\log T} = \frac{1}{\mathrm{KL}^{\mathcal{M}}_{\inf}(\nu_a, \mu_1)},
\end{align}
where $\mathrm{KL}^{\mathcal{M}}_{\inf}(\nu_a, \mu_1) := \inf \{\mathrm{KL}(\nu_a,\nu'): \nu' \in \mathcal{M} \text{ and } \mathbb{E}[\nu'] \geq \mu_1\}$ and $\mathrm{KL}(\nu_a,\nu') = \int\log{\frac{d\nu_a}{d\nu'}}d\nu_a$ is the Kullback-Leibler divergence between $\nu_a$ and $\nu'$. 
\end{definition}
More generally, for a probability distribution $\nu$ and $x\in\Re$, $\mathrm{KL}^{\mathcal{M}}_{\inf}(\nu,x) := \inf \{\mathrm{KL}(\nu,\nu'): \nu' \in \mathcal{M} \text{ and } \mathbb{E}[\nu'] \geq x\}$. The proof of the Lai-Robbins lower bound~\citep{LaiRobbins85, BURNETAS1996} relies on change of measure arguments. \citet[Lemma 1]{AgrawalJunejaGlynn2020OptimalHeavyTails} show that it is necessary to impose certain restrictions on class \(\mathcal{M} \), otherwise \(\mathrm{KL}^{\mathcal{M}}_{\inf}(\cdot,\cdot) = 0 \), which makes it impossible to achieve logarithmic regret. To this end, we focus on a specific subclass $\mathcal{L}$ satisfying certain regularity properties which are discussed in the following section. 
\begin{remark}
    We use $\mathcal{M}$ to denote the general class of distributions. When $\mathcal{M}$ satisfies Assumptions~\ref{prop-1} and~\ref{prop-2}, we denote the corresponding subclass by $\mathcal{L}$. For notational convenience, we drop the superscript in $\KLinf(\cdot,\cdot)$ when the underlying class is clear from context.
\end{remark}

\subsection{Distribution Class $\mathcal{L}$}\label{sec:calL}
Since logarithmic regret is impossible without restrictions on the class $\mathcal{M}$, we restrict our attention to subclasses $\mathcal{L}$ that satisfy the following property:
\begin{align*}
\nu  \in \mathcal{L} \implies \nu \text{  satisfies Assumptions~\ref{prop-1} and~\ref{prop-2} }.\numberthis\label{eq:calL}
\end{align*}
\begin{assumption}\label{prop-1}
    For $\nu \in \cal L$, let $\hat{\nu}_n$ be the empirical distribution of $n$ i.i.d.\ samples from $\nu$, having mean $m(\nu)$. Let $g(n)=C_1\log(1+n)+C_2$ for some $C_1,C_2>0$. Then, $\nu, \mathcal L$ satisfy:
    \begin{align}\label{eq-2} 
       \mathbb{P}\left(\exists n \in \mathbb{N}: n \mathrm{KL}^{\mathcal L}_{\inf}(\hat{\nu}_n,m(\nu)) - g(n) \geq x \right) \leq e^{-x}. 
    \end{align}
\end{assumption}

\begin{assumption}\label{prop-2}
    For $\nu \in \cal L$, let $\hat{\nu}_n$ be the empirical distribution of $n$ i.i.d.\ samples from $\nu$, having mean $m(\nu)$, and let $\delta>0$. Then there exist constants $d_0>0$ and $c_\nu > 0$ s.t. for all $d<d_0$ the following holds:
    \begin{align}\label{eq-3}
    \mathbb{P} \bigg( \mathrm{KL}^{\mathcal L}_{\inf}(\hat{\nu}_n, m(\nu)+ \delta) \leq \mathrm{KL}^{\mathcal L}_{\inf}(\nu, m(\nu)+ \delta) - d \bigg) 
    \leq e^{-n c_{\nu} d^2}.
    \end{align}
\end{assumption}
If $\mathrm{KL}^{\mathcal L}_{\inf}(\cdot, x)$ is appropriately continuous, then $\mathrm{KL}^{\mathcal L}_{\inf}(\hat{\nu}_n, m(\nu)) \to \mathrm{KL}^{\mathcal L}_{\inf}(\nu, m(\nu))$ and $\mathrm{KL}^{\mathcal L}_{\inf}(\hat{\nu}_n, x) \to \mathrm{KL}^{\mathcal L}_{\inf}(\nu, x) > 0$ for $x > m(\nu)$ as $n\to\infty$. Assumptions~\ref{prop-1} and~\ref{prop-2} correspond to an exponential concentration of the empirical $\mathrm{KL}^{\mathcal L}_{\inf}$ around these limits. 

To build intuition, recall that for $m, m' \in \mathbb{R}$, $\operatorname{KL}(N(m,1), N(m',1)) \propto (m-m')^2$, where $N(\cdot, 1)$ denotes the Gaussian distribution with unit variance and given mean. If $\mathcal L = \{N(m',1): m' > 0\}$, then $\KLinf(N(m,1), 0) \propto m^2$, since in this special case, there is a unique distribution in $\mathcal L$ with a given mean. Thus, inequality~\eqref{eq-2} can be viewed as a time-uniform exponential concentration bound for the deviation of the empirical mean from the true mean, analogous to Hoeffding's inequality. Several commonly encountered families of distributions, such as bounded support~\citep{HondaBounded10}, and the collection of distributions with a known uniform bound on their $(1+\epsilon)^{th}$ moment (for a known $\epsilon > 0$) ~\citep{agrawal21heavytailed}, are examples of $\cal L$. We discuss this further in Appendix~\ref{appendix:generality}. 

\section{Generalized $\KLinf$-UCB Algorithm and its optimality}\label{algorithm}
We now present a natural generalization of the $\KLinf$-UCB algorithm for arm distributions from $\cal L$, and prove that it is asymptotically optimal. Let ${\boldsymbol{\nu}} \in \mathcal{L}^K$ denote a $K$-armed bandit instance. 

\begin{algorithm}[!htbp]
\caption{\emph{Generalized} $\mathrm{KL}_{\inf}$-UCB\((K, \{f_a(\cdot)\}_{a=1}^K)\)} \label{Alg1}
\textbf{Input:}\(~K;~\) \( \cal L; \) exploration functions for each arm, i.e., \(f_a(\cdot)\). 

\textbf{Initialization:} Pull each arm $a \in [K]$ once
\newline
Set \(t \longleftarrow K+1 \).
\newline
Compute \(\hat{\nu}_a(t)\), and update \(N_a(t)\) for all arms \(a\in [K]\).
\begin{algorithmic}[1]\label{alg:KLinfUCB}
\FOR{$t = K+1, K+2, \dots$ }
    \FOR{each arm $a \in \{1, \ldots, K\}$}
        \STATE Compute index 
        $U_a(N_a(t),t) = \sup \left\{\Exp{\nu}: \nu \in \mathcal{L}, \mathrm{KL}(\hat{\nu}_a(t),\nu) \leq \frac{f_a(t)}{N_a(t)} \right\}$.
    \ENDFOR
    \STATE Pick an arm $A_{t+1} \in {\arg \max}_{a \in [K]}~U_a(N_a(t),t)$.
    \STATE Set \( t\leftarrow t + 1 \).
    \STATE Update \(\hat{\nu}_a(t)\), and update \(N_a(t)\) for all arms \(a\in [K]\).
\ENDFOR
\end{algorithmic}
\end{algorithm}



For exploration functions $f_a(\cdot)$ for each $a\in[K]$,  define the index of arm $a$ as
\begin{align*}
    U_a(N_a(t),t) = \sup \left\{x \in \Re:  \mathrm{KL}_{\inf}(\hat{\nu}_a(t),x) \leq \frac{f_a(t)}{N_a(t)} \right\}.
\end{align*}
Note that the index $U_a(\cdot,\cdot)$ is monotonically increasing in the second argument when the first argument is fixed. 
Computing the $\KLinf$-UCB index, while polynomial-time for many common distribution classes, can still be computationally demanding. To mitigate this, \citet{agrawal21heavytailed} proposed, in their setting, a batched variant with exponentially increasing batch sizes, at the cost of a slightly higher regret (within a constant factor of the lower bound). We instead focus on the unit-batch version of the algorithm and establish its asymptotic optimality over $\mathcal{L}$. Understanding the trade-off between statistical optimality and computational efficiency, and designing computationally efficient yet asymptotically optimal algorithms, remain interesting directions for future work.

\begin{theorem}\label{theorem-1}
	For \({\boldsymbol{\nu}} \in \mathcal L^K \) and for every suboptimal arm $a$, let \( f_a(t) = \log(t) + 2\log\log(t) + g(N_a(t)) \). Then Generalized \(\mathrm{KL}_{\inf} \)-UCB, with inputs $(K, f_a(.))$ satisfies
	\begin{equation*}
	\limsup\limits_{T\rightarrow \infty}\frac{\mathbb{E}[N_a(T)]}{\log(T)} \leq \frac{1}{\mathrm{KL}_{\inf}(\nu_a,\mu_1)}. 
	\end{equation*}
\end{theorem}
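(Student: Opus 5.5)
The argument is the standard two-event decomposition for KL-UCB-type algorithms, as in \citet{cappe2013kullback} and \citet{agrawal21heavytailed}, with the parametric concentration steps replaced by Assumptions~\ref{prop-1} and~\ref{prop-2}. Fix a suboptimal arm $a$ and a small $\epsilon>0$, and put $\mu_1^\epsilon=\mu_1-\epsilon$. Every pull of arm $a$ at time $t+1$ falls into one of two cases: either $U_1(N_1(t),t)\le \mu_1^\epsilon$ (the optimal arm is underestimated), or $U_a(N_a(t),t)\ge U_1(N_1(t),t)> \mu_1^\epsilon$. Accordingly I write
\begin{align*}
\mathbb{E}[N_a(T)] \le 1 + \sum_{t=K}^{T-1}\mathbb{P}\lrp{U_1(N_1(t),t)\le \mu_1^\epsilon} + \mathbb{E}\sum_{t=K}^{T-1}\mathbb{I}\{A_{t+1}=a,\ U_a(N_a(t),t)>\mu_1^\epsilon\},
\end{align*}
and show that the first sum is $o(\log T)$ and the second is at most $(1+o(1))\log T/\KLinf(\nu_a,\mu_1^\epsilon)+O(1)$. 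Then I let $\epsilon\to 0$. This requires continuity of $x\mapsto \KLinf(\nu_a,x)$ from the left at $\mu_1$, which I will assume, as is implicit in the class $\mathcal L$ (see the discussion after Assumption~\ref{prop-2}).

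\textbf{Underestimation term.} By monotonicity of the index, $U_1\le\mu_1^\epsilon<\mu_1$ implies $N_1(t)\KLinf(\hat\nu_1(t),\mu_1) > f_1(t)$. Take $f_1$ of the same form as $f_a$ in the statement, which I take to be the intended reading. Since $f_1(t)=\log t+2\log\log t+g(N_1(t))$, the event is contained in
\begin{align*}
\{\exists n:\ n\KLinf(\hat\nu_{1,n},\mu_1)-g(n)\ge \log t+2\log\log t\}.
\end{align*}
Here $\hat\nu_{1,n}$ is the empirical distribution of the first $n$ samples of arm $1$, and the time-uniform event does not depend on which sample size $N_1(t)$ is realised. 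Assumption~\ref{prop-1} bounds its probability by $e^{-\log t-2\log\log t}=1/(t\log^2 t)$, which is summable. The $g(n)$ offset built into $f_a$ exists precisely to absorb this union over $n$. So the first sum is $O(1)$ with no peeling argument needed.

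\textbf{Overestimation term.} Let $n_0=\lceil (1+\eta)\,\bar f(T)/(\KLinf(\nu_a,\mu_1^\epsilon)-d)\rceil$, where $\bar f(T)=\log T+2\log\log T+g(T)$ bounds $f_a(t)$ for $t\le T$ and $\eta>0$ is arbitrary. Since $g(T)=C_1\log(1+T)+C_2$, I need to check that $g$ is evaluated at $N_a(t)$, not at $T$, since otherwise the leading constant inflates to $1+C_1$. The remedy is to note that on pulls with $N_a(t)=n$ the threshold is $\log T+2\log\log T+g(n)$. Because $g(n)=O(\log n)$ and the relevant $n$ are of order $\log T$, we get $g(n)=O(\log\log T)=o(\log T)$, so $n_0=(1+o(1))\log T/\KLinf(\nu_a,\mu_1^\epsilon)$. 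Each pull with $N_a(t)=n$ increments $N_a$, so the second sum is at most
\begin{align*}
n_0+\sum_{n\ge n_0}\mathbb{P}\lrp{n\KLinf(\hat\nu_{a,n},\mu_1^\epsilon)\le \log T+2\log\log T+g(n)}.
\end{align*}
For $n\ge n_0$ the right-hand side, divided by $n$, is at most $\KLinf(\nu_a,\mu_1^\epsilon)-d$ for a fixed $d<d_0$. Assumption~\ref{prop-2}, applied with $\delta=\mu_1^\epsilon-\mu_a>0$, then bounds each term by $e^{-nc_{\nu_a}d^2}$, a summable geometric series.

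\textbf{Main obstacle.} The step I expect to be most delicate is making the choice of $d$ and $n_0$ uniform. I will handle it by taking $d=\eta'\KLinf(\nu_a,\mu_1^\epsilon)$ for a small $\eta'$, verifying $\log T+2\log\log T+g(n)\le n(\KLinf(\nu_a,\mu_1^\epsilon)-d)$ for all $n\ge n_0$ and $T$ large, which holds since $g(n)/n\to0$. After that, dividing by $\log T$ and letting $T\to\infty$, then $\eta,\eta'\to0$, then $\epsilon\to0$, gives the claim.
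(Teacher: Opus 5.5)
Your proposal is correct and follows essentially the paper's route. It uses the same underestimation/overestimation split, applies Assumption~\ref{prop-1} time-uniformly to get a summable $1/(t\log^2 t)$ bound for arm~1, and uses Assumption~\ref{prop-2} for the overestimation term. Indexing that term by the sample count $n$ is in fact the rigorous form of the paper's $\sum_t e^{-tc_{\nu_a}d^2}$ step, and your fixed-$d$, $\eta,\eta'\to 0$ limit replaces the paper's $T$-dependent choice of $d$.

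The one substantive difference is your shift to $\mu_1-\epsilon$. It costs a left-continuity hypothesis on $\KLinf(\nu_a,\cdot)$ at $\mu_1$ that is not contained in Assumptions~\ref{prop-1}--\ref{prop-2}; the remark you cite concerns continuity in the first argument, not in $x$. The paper instead works at $\mu_1$ directly with $\delta=\mu_1-\mu_a$. It pays for this by tacitly identifying $\{U_1\le\mu_1\}$ with $\{N_1\KLinf(\hat\nu_1,\mu_1)\ge f_1\}$, which is itself a boundary-regularity assumption.
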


The proof of the above theorem closely follows that of \citet[Theorem 1]{agrawal21heavytailed} and is presented in the Appendix~\ref{appendix:proof_theo1} for completeness. 

\section{Regret Tail Characterization}\label{fragility}
We now discuss the regret-tail characterization for the optimal (Generalized) $\KLinf$-UCB.
\subsection{Regret Tail Lower Bound}
Since optimal bandit algorithms explore just enough to achieve good \emph{expected regret}, they are fragile in that large-regret events can still occur with a non-negligible probability. \citet{FanGlynn2024Fragility} formalized this and proved a lower bound on the probability that regret is large for optimal bandit algorithms. We formally present this lower bound below. Note that the lower bound holds for any generic distribution class $\mathcal{M}$ of arms; not just for classes $\cal L$ of arm distributions.

\begin{theorem}\label{theorem-2}
    Let $\pi$ be $\mathcal{M}^K$-optimal algorithm, and for $\gamma \in (0,1)$, let $\mathcal{D}_\gamma(T)=\bigl[\log^{1+\gamma}(T),\, (1-\gamma)T \bigr]$. Then, for ${\boldsymbol{\nu}} \in \mathcal{M}^K$, and for an $i$-th best arm,
    \begin{equation}\label{eq:tail-lower-bound}
    \liminf_{T \to \infty}
    \inf_{x \in \mathcal{D}_\gamma(T)}
    \frac{\log \mathbb{P}_\nu^\pi\!\left(N_{i}(T) > x\right)}{\log x}
    \;\ge\; - \sum_{j=1}^{i-1}
    \inf_{\substack{\widetilde{\nu} \in \mathcal{M}:\\ m(\widetilde{\nu}) < \mu_i}}
    \frac{\mathrm{KL}\!\left(\widetilde{\nu}, \nu_j\right)}
    {\mathrm{KL}_{\inf}\!\left(\widetilde{\nu}, \mu_i\right)}. 
    \end{equation}
\end{theorem}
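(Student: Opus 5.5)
The plan is a change-of-measure argument in the style of Lai–Robbins, applied not at the horizon $T$ but at an intermediate time comparable to $x$. Working at time scale $x$ is what puts $\log x$ rather than $\log T$ in the exponent; this matters because $\log x$ can be as small as $(1+\gamma)\log\log T$.

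\textbf{Setup.} Fix $\varepsilon>0$. For each $j<i$ pick $\widetilde\nu_j\in\mathcal M$ with $m(\widetilde\nu_j)<\mu_i$ and $\mathrm{KL}(\widetilde\nu_j,\nu_j)<\infty$; if no such choice exists, the right-hand side of \eqref{eq:tail-lower-bound} is $-\infty$ and there is nothing to prove. Let $\widetilde{\boldsymbol\nu}\in\mathcal M^K$ be the instance that agrees with ${\boldsymbol\nu}$ except that $\nu_j$ is replaced by $\widetilde\nu_j$ for $j<i$. In $\widetilde{\boldsymbol\nu}$, arm $i$ is the unique optimal arm and every arm $j<i$ is suboptimal against mean $\mu_i$. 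For $x\in\mathcal D_\gamma(T)$ set $t=t(x)=\lceil (1+\gamma/2)x\rceil$. Then $t\le T$ for large $T$, because $x\le(1-\gamma)T$. Also $t\to\infty$ uniformly over $\mathcal D_\gamma(T)$, because $x\ge\log^{1+\gamma}T$, and $\log t=\log x+O(1)$. Define
\[
E=\Bigl\{N_i(t)>x,\; N_j(t)\le n_j:=\tfrac{(1+\varepsilon)\log t}{\mathrm{KL}_{\inf}(\widetilde\nu_j,\mu_i)}\ \forall j<i,\; \sum_{j<i}\sum_{s\le N_j(t)}\log\tfrac{d\widetilde\nu_j}{d\nu_j}(X_{j,s})\le (1+\varepsilon)\sum_{j<i} n_j\,\mathrm{KL}(\widetilde\nu_j,\nu_j)\Bigr\}.
\]
Here $X_{j,s}$ denotes the $s$-th sample from arm $j$, so the double sum is the log-likelihood ratio $L_t$ of $\widetilde{\boldsymbol\nu}$ against ${\boldsymbol\nu}$ up to time $t$. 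Since $N_i(T)\ge N_i(t)$, we have $\mathbb P_{\boldsymbol\nu}(N_i(T)>x)\ge \mathbb P_{\boldsymbol\nu}(E)$. The change of measure then gives
\[
\mathbb P_{\boldsymbol\nu}(E)=\widetilde{\mathbb E}\bigl[\mathbb I_E\, e^{-L_t}\bigr]\ge \widetilde{\mathbb P}(E)\, \exp\Bigl(-(1+\varepsilon)^2\log t\sum_{j<i}\tfrac{\mathrm{KL}(\widetilde\nu_j,\nu_j)}{\mathrm{KL}_{\inf}(\widetilde\nu_j,\mu_i)}\Bigr).
\]
If I can show $\widetilde{\mathbb P}(E)\ge c>0$ for all large $t$, then dividing by $\log x$, letting $T\to\infty$ (so $\log t/\log x\to1$ uniformly), then $\varepsilon\to0$, and finally taking the infimum over each $\widetilde\nu_j$ separately (the sum decouples) yields \eqref{eq:tail-lower-bound}.

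\textbf{The main obstacle: $\widetilde{\mathbb P}(E)\to1$.} This is where $\mathcal M^K$-optimality of $\pi$ on $\widetilde{\boldsymbol\nu}$ is used, and I will handle the three parts of $E$ in turn.

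\emph{Arm $i$.} Every other arm is suboptimal in $\widetilde{\boldsymbol\nu}$, so optimality gives $\widetilde{\mathbb E}[t-N_i(t)]=O(\log t)$. Markov's inequality then gives $\widetilde{\mathbb P}(N_i(t)\le x)\le \widetilde{\mathbb P}\bigl(t-N_i(t)\ge \tfrac{\gamma}{2}x\bigr)=O(\log t/t)\to0$.

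\emph{Arms $j<i$.} Here Markov's inequality alone only bounds the probability of exceeding $n_j$ by $1/(1+\varepsilon)$, which is too weak, especially after a union bound over $j$. The fix is to also use the matching high-probability lower bound $\widetilde{\mathbb P}\bigl(N_j(t)\le(1-\varepsilon)\log t/\mathrm{KL}_{\inf}(\widetilde\nu_j,\mu_i)\bigr)\to0$. This is the standard consequence of consistency from the Lai–Robbins/Burnetas–Katehakis argument, and I would reprove it by a secondary change of measure. Combining it with $\widetilde{\mathbb E}[N_j(t)]\sim\log t/\mathrm{KL}_{\inf}(\widetilde\nu_j,\mu_i)$ forces $N_j(t)\,\mathrm{KL}_{\inf}(\widetilde\nu_j,\mu_i)/\log t\to1$ in probability. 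Indeed, a random variable that is w.h.p.\ at least $(1-\varepsilon)m$ and has mean about $m$ cannot exceed $(1+\varepsilon)m$ with non-vanishing probability once $\varepsilon'\ll\varepsilon$ in the lower bound. This gives $\widetilde{\mathbb P}(N_j(t)>n_j)\to0$.

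\emph{Likelihood term.} On $\{N_j(t)\le n_j\}$, the partial sums $\sum_{s\le k}\log\frac{d\widetilde\nu_j}{d\nu_j}(X_{j,s})$ for $k\le n_j$ are controlled by the maximal strong law of large numbers. This is because under $\widetilde{\boldsymbol\nu}$ the increments are i.i.d.\ with mean $\mathrm{KL}(\widetilde\nu_j,\nu_j)<\infty$, so $\max_{k\le n}\bigl(S_k-k\,\mathrm{KL}\bigr)/n\to0$ a.s. Hence the likelihood condition in $E$ also holds with probability tending to $1$.

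Putting these together gives $\widetilde{\mathbb P}(E)\to1$, and the change-of-measure bound above finishes the proof. All bounds depend on $x$ only through $t\to\infty$, so they hold uniformly over $x\in\mathcal D_\gamma(T)$, which is what the $\liminf\inf$ requires.
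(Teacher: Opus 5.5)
Your proposal is correct and follows the same route as the paper. Both use the alternative instance $\widetilde{\boldsymbol\nu}$ that makes arm $i$ optimal, a change of measure, the in-probability concentration of $N_j/\log t$ around $1/\mathrm{KL}_{\inf}(\widetilde\nu_j,\mu_i)$, and a law of large numbers for the log-likelihood ratio. The paper imports that concentration from Fan--Glynn as Lemma~\ref{lem:mp-optimal-pulls}, while you re-derive it from the Lai--Robbins high-probability lower bound plus the expectation asymptotics.

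The only other difference is organizational. You perform the change of measure at $t\approx(1+\gamma/2)x$ and use $N_i(T)\ge N_i(t)$, which builds the extension to all of $\mathcal{D}_\gamma(T)$ directly into the argument. The paper instead proves the bound at $x=(1-\gamma)T$ and then quotes the extension Lemma~\ref{lem:ext1}.
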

\citet[Section~3.2]{FanGlynn2024Fragility} prove the above result for the special case when $\mathcal{M}^K$ is SPEF, and state the more general result as in the Theorem above. Closely following their proof for SPEF, we explicitly provide a proof for the general lower bound in Appendix~\ref{appendix:proof_theo2} for completeness.  

Note that each term inside the summation on the right-hand side of the Equation~\eqref{eq:tail-lower-bound}, denoted as  $C(\widetilde{\nu},{\nu_j}) := \tfrac{\mathrm{KL}(\widetilde{\nu}, \nu_j)}{\mathrm{KL}_{\inf}(\widetilde{\nu}, \mu_i)}$ is at least 1, i.e., $C(\widetilde{\nu},{\nu_j}) \geq 1$. This directly follows from the definition of $\KLinf(\cdot,\cdot)$ with $\mu_j > \mu_i$. For the special case when $\inf_{\widetilde{\nu}} C(\widetilde{\nu},{\nu_j})=1$ for every sub-optimal arm $j$, the condition that holds when the underlying model class $\mathcal{M}$ is discrimination equivalent. Now, we formally define discrimination equivalence property in the following section.


\subsection{Discrimination Equivalence}
Discrimination equivalence is a key structural property of a distribution class that governs regret tail behavior. As shown by \citet{FanGlynn2024Fragility}, under discrimination equivalence, even asymptotically optimal algorithms for SPEF models exhibit heavy-tailed regret, with tails resembling those of a truncated Cauchy distribution. In particular, the regret tail probability $\mathbb{P}(R(T) > x)$ decays only polynomially, at a rate of order $\approx x^{-1}$ over a wide range of deviations $x$. This implies that discrimination-equivalent classes are inherently hard to learn: even optimal algorithms incur large regret with non-negligible probability. The definition below generalizes that of discrimination equivalence for SPEF classes, introduced by \citet[Definition 3]{FanGlynn2024Fragility}. 
\begin{definition}\label{def-5.3}
    We say a distribution class $\mathcal{M}$ is discrimination equivalent if for every pair
    $\nu,\nu' \in \mathcal{M}$ with $m(\nu) > m(\nu')$, the following holds
    \begin{equation}\label{eq:de}
    \inf_{\substack{\widetilde{\nu} \in \mathcal{M}:\\ m(\widetilde{\nu}) < m(\nu')}}
    \frac{\mathrm{KL}\!\left(\widetilde{\nu}, \nu\right)}
    {\mathrm{KL}_{\inf}\!\left(\widetilde{\nu}, m(\nu')\right)} = 1. 
    \end{equation}
\end{definition}
As noted earlier, the left-hand side of~\eqref{eq:de} is always at least $1$. Indeed, $\nu$ itself is a feasible candidate in the optimization defining $\KLinf(\tilde{\nu}, m(\nu'))$, since $m(\nu) > m(\nu')$.

To build intuition for condition in~\eqref{eq:de}, suppose the optimization defining $\KLinf(\tilde{\nu}, m(\nu'))$ admits a solution $\nu^*$. Then the ratio $\mathrm{KL}(\widetilde{\nu},\nu)/\mathrm{KL}(\widetilde{\nu},\nu^*)$ can be interpreted as the relative difficulty of two sequential testing problems:  ($\widetilde{\nu} \textbf{ vs. } \nu$) and ($\widetilde{\nu} \textbf{ vs. } \nu^*$), given samples from $\widetilde{\nu}$. If there is a distribution $\tilde{\nu}\in \mathcal M$ for which these two problems are comparably difficult, then the ratio is close to one, indicating the class $\cal M$ is intrinsically hard to learn.

Exploiting the special formulation of $\mathrm{KL}$-divergence for SPEF,~\citet[Lemma~1, Proposition~1 in Section~3.1]{FanGlynn2024Fragility} provide several characterizations and examples of discrimination equivalent subclasses of SPEF. However, extending such characterizations to general distributions is not straightforward. In fact, we are unaware of any examples beyond those provided by the authors. In contrast, here we show that the two commonly studied non-parametric classes (also the running examples for this work) --- \emph{moment-bounded class and the bounded-support distributions --- are not discrimination equivalent.}  

\begin{example}\label{exmpl:moment-bounded}
    In this example, we consider the class $\mathcal L_{B,\varepsilon}$ of distributions whose $(1+\varepsilon)$-th moments are uniformly bounded by a constant $B$. 
    
    For $\varepsilon>0$ and $B>0$, define 
    \[\mathcal{L}_{B,\varepsilon} \;\triangleq\; \left\{ \nu \in \mathcal{P}(\mathbb{R}) :\; \mathbb{E}_{\nu}\!\left[|X|^{1+\varepsilon}\right] \le B \right\}. \numberthis\label{eq:Lbeps} \] 
    Below, we demonstrate a pair of distributions $\nu, \nu' \in \mathcal L_{B,\varepsilon}$ such that the condition in Equation~\eqref{eq:de} is violated. For simplicity, let's take $\epsilon=1$ and $B=1$.
    
    Let $\nu = \delta_{1}$ be the point mass at $\{1\}$ so that $m(\nu)=1$, and let $\nu'=\mathrm{Bernoulli}(p)$ for some $p\in(0,1)$. Clearly, $\nu,\nu' \in \mathcal{L}_{1,1}$ with  $m(\nu)>m(\nu')$. Note that $\widetilde{\nu} = \delta_{1}$ is not permissible because $m(\widetilde{\nu}) > m(\nu')$. Hence any arbitrary distribution $\widetilde{\nu} \in \mathcal{L}_{1,1}$ if supported at any point other than 1, then $\mathrm{KL}(\widetilde{\nu},\nu') = +\infty$. However, from \citet[Lemma~4.2]{Agrawal2022HeavyTailsBandits}, we know that $\mathcal{L}_{1,1}$ is a compact set in the topology of weak convergence induced by the \emph{L\'evy metric}. Also, $\KLinf(.,m(\nu'))$ is a continuous function in this weak topology~\citep[Lemma~4.11]{Agrawal2022HeavyTailsBandits}. We refer the reader to section ~4.2 and section~4.3 in~\citet{Agrawal2022HeavyTailsBandits} for more details. Thus, using these arguments, we can conclude that $\KLinf(.,m(\nu'))$ is bounded. Hence,
    \[ 
    \underset{\widetilde{\nu} :m(\widetilde{\nu}) < m(\nu')}{\inf}~~
    \frac{\mathrm{KL}(\widetilde{\nu},\nu)}
    {\mathrm{KL}_{\inf}(\widetilde{\nu},m(\nu'))}
    \;=\;+\infty.
    \]
\end{example}

\begin{example}\label{exmpl:bounded-support}
    We now consider the class of reward distributions with bounded support. For a constant $a,b \in \Re$, define 
    \[\mathcal{L}_{[a,b]} \;\triangleq\; \left\{ \nu \in \mathcal{P}(\mathbb{R}):\; \operatorname{Supp}(\nu) \subseteq [a,b] \right\}. \numberthis\label{eq:Lbdd} \] 
    For simplicity of presentation, without loss of generality, we take $a=0, b=1$ so that 
    $\mathcal{L}_{[0,1]} = \{\nu \in \mathcal{P}([0,1])\}.$
    Note that the example from Example~\ref{exmpl:moment-bounded} serves as a counter example for this class as well. However, for providing more intuition, we construct another example, demonstrating that $\mathcal L_{[0,1]}$ is not discrimination equivalent. 
    
    Let ${\nu}=\mathrm{Bernoulli}(p)$ and $\nu' = \mathrm{Bernoulli}(q)$ with $p,q \in (0,1)$ such that $q<p$. Note that if $\widetilde{\nu}$ is supported at any point other than $\{0\}$ and $\{1\}$, then $\mathrm{KL}(\widetilde{\nu},\nu) = +\infty$. Thus, we restrict our attention, where $\widetilde{\nu}$ is supported on both $\{0,1\}$ or any one of $\{0\}$ or $\{1\}$. If supported on both $\{0,1\}$ then $\widetilde{\nu} = \mathrm{Bernoulli}(r)$ with $r \in (0,1)$ such that $r<q$ due to the mean restriction.   
    Consequently, the ratio
    \[
    \frac{\mathrm{KL}(\widetilde{\nu},\nu)}
    {\mathrm{KL}_{\inf}(\widetilde{\nu},m(\nu'))}
    \;=\;
    \frac{\mathrm{KL}(\mathrm{Bernoulli(r)},\mathrm{Bernoulli}(p))}
    {\mathrm{KL}(\mathrm{Bernoulli(r)},\mathrm{Bernoulli}(q))} 
    \neq 1.
    \]
    If $\widetilde{\nu}$ is supported in any one of $\{0\}$ or $\{1\}$, then $\widetilde{\nu}$ has to be $\delta_0$. Otherwise, if $\widetilde{\nu}=\delta_1$ then $m(\widetilde{\nu}) > m(\nu')$. Thus, taking $\widetilde{\nu} = \delta_0$, similar to the above, the ratio is not equal to 1. Hence, the bounded-support distribution class $\mathcal{L}_{[0,1]}$ is \emph{not} discrimination
equivalent.
\end{example}



\subsection{Regret Tail Upper Bound}

\citet{FanGlynn2024Fragility} establish a regret tail upper bound for the KL-UCB algorithm, which is asymptotically optimal for SPEF models. However, their analysis does not extend beyond the SPEF setting to more general distribution classes. In particular, it does not yield guarantees for the $\KLinf$-UCB algorithm, which is asymptotically optimal for broader families, including bounded, finitely supported, and certain heavy-tailed distributions. Motivated by this gap, in the following, we present our main result: a regret tail upper bound for the generalized $\KLinf$-UCB algorithm (Algorithm~\ref{alg:KLinfUCB}), applicable to distribution classes beyond the SPEF setting.\begin{theorem}\label{theorem-3}
    Let $\pi$ be $\mathcal{L}^K$-optimal Generalized $\mathrm{KL}_{\inf}$-UCB algorithm. For $\gamma \in (0,1)$, let $\mathcal{D}_\gamma(T)=\bigl[\log^{1+\gamma}(T),\, (1-\gamma)T \bigr]$. Then, for any ${\boldsymbol{\nu}} \in \mathcal{L}^K$, and for the $i$-th best arm in ${\boldsymbol{\nu}}$,
    \begin{equation}\label{eq-4}
    \limsup_{T \to \infty}
    \inf_{x \in \mathcal{D}_\gamma(T)}
    \frac{\log \mathbb{P}_{{\boldsymbol{\nu}}}^\pi\!\left(N_{i}(T) > x\right)}{\log x}
    \;\le\; - (i-1).
    \end{equation}
\end{theorem}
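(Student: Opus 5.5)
Since the theorem bounds $\limsup_T \inf_{x}$ of the ratio, it is enough to show that for all large $T$ there is at least one $x\in\mathcal D_\gamma(T)$, for instance $x_T=\log^{1+\gamma}(T)$ or a suitable power of it, with $\mathbb P(N_i(T)>x_T)\le x_T^{-(i-1)+o(1)}$. I will actually aim for this bound uniformly over $x\in\mathcal D_\gamma(T)$, which is stronger.

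\textbf{Basic event decomposition.} Fix $x$. On $\{N_i(T)>x\}$, let $\tau$ be the time of the $\lceil x\rceil$-th pull of arm $i$. At $\tau$ we have $U_i(x,\tau)\ge U_j(N_j(\tau),\tau)$ for every better arm $j<i$. For large $T$, $x\ge\log^{1+\gamma}T$ is much larger than $f_i(T)/\KLinf(\nu_i,\mu_i+\delta)$. So Assumption~\ref{prop-2}, together with monotonicity of $U_i$ in $t$, shows that $U_i(x,\tau)\le\mu_i+\delta$ except on an event of probability $e^{-c x}$, which is negligible compared with any polynomial in $x$. Choosing $\delta<\min_{j<i}(\mu_j-\mu_i)$, the main event is therefore that every better arm $j<i$ has index below $\mu_i+\delta<\mu_j$ at some time $\tau\ge x$. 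Since $U_j$ is increasing in $t$, for each $j$ this means there exist $n_j$ and $t\ge x$ with
\[
n_j\KLinf(\hat\nu_{j,n_j},\mu_i+\delta)\ge f_j(t)\ge \log x .
\]
Because $\mu_i+\delta<\mu_j$, this implies $n_j\KLinf(\hat\nu_{j,n_j},\mu_j)-g(n_j)\ge\log x+2\log\log x$. I will check that the monotonicity $\KLinf(\cdot,y)\le\KLinf(\cdot,y')$ for $y\le y'$ holds directly from the definition.

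\textbf{Applying Assumption~\ref{prop-1} to each better arm.} For each $j$, Assumption~\ref{prop-1} bounds the probability of the event above, uniformly over $n_j$, by $e^{-\log x}=x^{-1}$.

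\textbf{Independence across arms.} The arms' reward sequences are independent under the stack-of-rewards model, and the events above depend only on the sample sequence of arm $j$, thanks to the time-uniform form over $n$ in Assumption~\ref{prop-1}. The $i-1$ events are therefore independent, and their joint probability is at most $x^{-(i-1)}$.

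\textbf{Correction for arm $j$'s exploration term.} The delicate point is that $f_j$ contains $g(N_j(t))$. I must ensure the deviation threshold is $\log t - o(\log x)$ rather than $\log T$; using $t\ge\tau\ge x$ gives this. Adding the exceptional probability $e^{-cx}$ yields
\[
\mathbb P(N_i(T)>x)\le x^{-(i-1)}+(\text{poly}\log T)\,e^{-cx},
\]
where the $\text{poly}\log T$ factor comes from a union bound. Hence $\log\mathbb P/\log x\le-(i-1)+o(1)$ uniformly on $\mathcal D_\gamma(T)$.

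\textbf{Main obstacle.} The hardest step is making the suboptimal-index bound $U_i(x,\tau)\le\mu_i+\delta$ rigorous uniformly over $\tau\le T$. Since $f_i(\tau)$ can be as large as $\log T$, Assumption~\ref{prop-2} must be applied with deviation $d$ chosen so that $f_i(T)/x\to0$. This holds because $x\ge\log^{1+\gamma}T$, and it is exactly where the lower end of $\mathcal D_\gamma(T)$ enters. Alongside this, I need to verify carefully that conditioning on the stopping time $\tau$ does not break the independence across arms. I would handle this by working with the i.i.d.\ per-arm reward tables and time-uniform events, so that no conditioning is needed.
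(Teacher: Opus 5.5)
Your proposal is correct and is essentially the paper's proof. You discard the event that arm $i$'s index after about $x$ pulls exceeds $\mu_i+\delta$ via Assumption~\ref{prop-2} (using $f_i(T)/x\to 0$ on $\mathcal D_\gamma(T)$), and you dominate what remains by the intersection over $j<i$ of the time-uniform events $\{\exists n:\ n\KLinf(\hat\nu_{j,n},\mu_j)-g(n)\ge \log x+2\log\log x\}$; these depend only on arm $j$'s reward stack, so independence plus Assumption~\ref{prop-1} gives $x^{-(i-1)}$.

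Your route is slightly leaner than the paper's: a single time $\tau$ serves all better arms, there is no split of the time range at $x_T^{C_\nu}$, and Lemma~\ref{lem:A-18} is not needed, since the infimum over $\mathcal D_\gamma(T)$ is at most the value at $x_T=\log^{1+\gamma}T$. The off-by-one in which pull defines $\tau$ and the superfluous union-bound factor are harmless.
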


Theorem~\ref{theorem-3} provides the first regret tail upper bound for asymptotically optimal bandit algorithms beyond parametric models. This result has two important implications, stemming from the generality of the distribution class $\mathcal{L}$ and the flexibility of the $\mathrm{KL}_{\inf}$-UCB framework, which we discuss next. We refer the reader to  Appendix~\ref{appendix:proof_theo3} for a proof. 

\noindent\textbf{Upper bound for moment-bounded models.} Recall that each element of the class $\mathcal L_{B,\varepsilon}$ defined in Equation~\eqref{eq:Lbeps} satisfies Assumptions~\ref{prop-1} and~\ref{prop-2} (Appendix~\ref{appendix:generality}). Hence, Theorem~\ref{theorem-1} implies that the $\mathrm{KL}_{\inf}$-UCB algorithm is asymptotically optimal for bandits with arm distributions from $\mathcal{L}_{B,\varepsilon}$, originally proven by \citep{agrawal21heavytailed}. Thus, we get the following corollary to Theorem~\ref{theorem-3}, 


\begin{corollary}\label{cor-2}
    Let $\pi$ be $\mathcal{L}^K_{B,\varepsilon}$-optimal \textit{Generalized} $\mathrm{KL}_{\inf}$-UCB algorithm. For $\gamma \in (0,1)$ let $\mathcal{D}_\gamma(T)=[\log^{1+\gamma}(T),\, (1-\gamma)T ]$. Then for each ${\boldsymbol{\nu}} \in \mathcal{L}^K_{B,\varepsilon}$, $\pi$ satisfies the regret tail upper bound in Equation~\eqref{eq-4}. 
\end{corollary}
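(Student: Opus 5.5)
Corollary~\ref{cor-2} follows once we check that the hypotheses of Theorem~\ref{theorem-3} hold with $\mathcal L=\mathcal L_{B,\varepsilon}$; we then apply that theorem verbatim. The argument has three steps.

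\emph{Step 1 (membership in $\mathcal L$).} We must show that every $\nu\in\mathcal L_{B,\varepsilon}$ satisfies Assumptions~\ref{prop-1} and~\ref{prop-2}, with $\mathrm{KL}_{\inf}$ computed relative to the class $\mathcal L_{B,\varepsilon}$ itself.
\begin{itemize}
\item For Assumption~\ref{prop-1}, we invoke the time-uniform concentration for the empirical $\mathrm{KL}_{\inf}$ established by \citet{agrawal21heavytailed}. That argument uses the dual (variational) representation of $\mathrm{KL}_{\inf}$ over moment-constrained classes, expressing it as a supremum over a compact set of dual parameters of $\mathbb E_{\hat\nu_n}\log(1-\lambda_1(X-m)-\lambda_2(B-|X|^{1+\varepsilon}))$. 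A mixture-martingale bound then gives $\mathbb P(\exists n: n\mathrm{KL}_{\inf}(\hat\nu_n,m(\nu))-g(n)\ge x)\le e^{-x}$, with $g(n)=C_1\log(1+n)+C_2$ and explicit constants.
\item For Assumption~\ref{prop-2}, we use the same dual representation. It makes $\mathrm{KL}_{\inf}(\cdot,m(\nu)+\delta)$ a supremum of empirical averages of functions that are bounded from below appropriately and Lipschitz in the dual variable. A lower-deviation (Chernoff-type) bound on such averages yields $\mathbb P(\mathrm{KL}_{\inf}(\hat\nu_n,\cdot)\le\mathrm{KL}_{\inf}(\nu,\cdot)-d)\le e^{-nc_\nu d^2}$ for small $d$.
\end{itemize}
This is exactly what Appendix~\ref{appendix:generality} records, so in the proof we cite it rather than redo it.

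\emph{Step 2 (optimality).} With Step 1 in hand, Theorem~\ref{theorem-1} applies with $f_a(t)=\log t+2\log\log t+g(N_a(t))$. It gives $\limsup_T \mathbb E[N_a(T)]/\log T\le 1/\mathrm{KL}_{\inf}(\nu_a,\mu_1)$. Combined with the Lai--Robbins/Burnetas--Katehakis lower bound in \eqref{eq:lai-robbins-lower-bound} (the $\liminf$ direction, valid for consistent algorithms), this shows the limit exists and equals $1/\mathrm{KL}_{\inf}(\nu_a,\mu_1)$. So the algorithm is $\mathcal L_{B,\varepsilon}^K$-optimal, which justifies the premise of the corollary.

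\emph{Step 3 (tail bound).} Theorem~\ref{theorem-3} holds for an arbitrary class $\mathcal L$ satisfying \eqref{eq:calL}. Applying it with $\mathcal L=\mathcal L_{B,\varepsilon}$ yields \eqref{eq-4} for every $\boldsymbol\nu\in\mathcal L_{B,\varepsilon}^K$ and every $i$-th best arm.

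The only substantive step is Step 1, and specifically Assumption~\ref{prop-2}. The difficulty is that heavy tails preclude naive sub-Gaussian concentration. If the appendix does not already cover it, the first thing to try is to use boundedness of the dual optimizer away from the boundary of the dual region when the mean is strictly less than the target, so that the relevant log-terms are bounded below; a one-sided Hoeffding-type bound then controls the lower deviations.
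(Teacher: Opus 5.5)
Your proposal matches the paper's argument. The paper likewise obtains Corollary~\ref{cor-2} by recording in Appendix~\ref{appendix:generality} (via the cited heavy-tailed $\KLinf$ concentration results) that every $\nu\in\mathcal L_{B,\varepsilon}$ satisfies Assumptions~\ref{prop-1} and~\ref{prop-2} with $g(n)=2\log(1+n)+1$, so that Theorem~\ref{theorem-1} gives $\mathcal L_{B,\varepsilon}^K$-optimality and Theorem~\ref{theorem-3} applies verbatim.

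Your closing contingency is not needed, and its premise that the log terms at the dual optimizer are bounded below need not hold, since the optimizer can sit on the boundary of the dual region. If you ever had to reprove Assumption~\ref{prop-2} yourself, the usable tool is the first-order condition $\mathbb{E}_{\nu}\bigl[1/(1-\lambda_1^*(X-x)-\lambda_2^*(B-|X|^{1+\varepsilon}))\bigr]\le 1$ at the target $x=m(\nu)+\delta$, which controls the negative exponential moments needed for the Chernoff step.
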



\noindent\textbf{Upper-bound for bounded-support models.} As in the previous example, recall that each element $\nu\in\mathcal L_{[a,b]}$, defined in Equation~\eqref{eq:Lbdd}, also satisfies Assumptions~\ref{prop-1} and~\ref{prop-2}. Again, the \textit{Generalized} $\mathrm{KL}_{\inf}$-UCB algorithm  with $f_a(t)=\log t + \log\log t$ is asymptotically optimal for the bounded-support model $\mathcal{L}_{[a,b]}$ (also see, \citet{cappe2013kullback,agrawal21heavytailed}). Thus, we get the following corollary to Theorem~\ref{theorem-3}, 


\begin{corollary}\label{cor-3}
    Let $\pi$ be $\mathcal{L}^K_{[a,b]}$-optimal Generalized $\KLinf$-UCB algorithm. For $\gamma \in (0,1)$ let $\mathcal{D}_\gamma(T)=[\log^{1+\gamma}(T), (1-\gamma)T ]$. Then for each ${\boldsymbol{\nu}} \in \mathcal{L}^K_{[a,b]}$, $\pi$ satisfies the regret tail upper bound in Equation~\eqref{eq-4}.
\end{corollary}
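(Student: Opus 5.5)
The plan is to obtain Corollary~\ref{cor-3} directly from Theorem~\ref{theorem-3} by checking that the bounded-support class $\mathcal{L}_{[a,b]}$ fits the framework of Section~\ref{sec:calL}. Theorem~\ref{theorem-3} uses the class only through Assumptions~\ref{prop-1} and~\ref{prop-2} and through the $\mathcal{L}^K$-optimality of the algorithm. So it suffices to verify three things: that every $\nu\in\mathcal{L}_{[a,b]}$ satisfies both assumptions with $\mathcal{L}=\mathcal{L}_{[a,b]}$, and that Generalized $\KLinf$-UCB is $\mathcal{L}^K_{[a,b]}$-optimal.

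\textbf{Step 1 (reduction to $[0,1]$).} By the affine map $x\mapsto (x-a)/(b-a)$, it is enough to treat $\mathcal{L}_{[0,1]}$. This map preserves KL divergences and transforms means affinely, so $\KLinf$ and the regret counts $N_i(T)$ are unchanged.

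\textbf{Step 2 (Assumption~\ref{prop-1}).} Here I will use the dual (variational) representation of $\KLinf$ for bounded support,
\[
\KLinf(\hat\nu_n,\mu)=\max_{\lambda\in[0,1/(1-\mu)]}\mathbb{E}_{\hat\nu_n}\log\bigl(1-\lambda(X-\mu)\bigr).
\]
For each fixed $\lambda$, the process $\prod_{s\le n}(1-\lambda(X_s-\mu))$ is a nonnegative martingale. A mixture over $\lambda$ (or a discretization/peeling argument, as in \citet{agrawal21heavytailed}) together with Ville's inequality then gives the time-uniform bound~\eqref{eq-2}, where the $C_1\log(1+n)$ term pays for the mixture. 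This is exactly the route taken in Appendix~\ref{appendix:generality}, which I will cite.

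\textbf{Step 3 (Assumption~\ref{prop-2}).} For $x=m(\nu)+\delta<1$, the map $\kappa\mapsto\KLinf(\kappa,x)$ is a supremum over the compact set of $\lambda$ of functionals $\mathbb{E}_\kappa\log(1-\lambda(X-x))$. These functionals have bounded, Lipschitz integrands, uniformly in $\lambda$, because $1-\lambda(X-x)$ is bounded away from $0$ once $\lambda$ is restricted slightly inside its range. A uniform Hoeffding bound over a finite grid of $\lambda$, combined with the Lipschitz continuity in $\lambda$, then yields~\eqref{eq-3} with $c_\nu$ depending on $\delta$.

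\textbf{Step 4 (optimality and conclusion).} With both assumptions in hand, Theorem~\ref{theorem-1} gives the upper bound on $\mathbb{E}[N_a(T)]/\log T$. The Lai--Robbins/Burnetas--Katehakis lower bound~\eqref{eq:lai-robbins-lower-bound} gives the matching lower bound, so the algorithm is $\mathcal{L}^K_{[0,1]}$-optimal. Theorem~\ref{theorem-3} then yields~\eqref{eq-4} for every $\boldsymbol{\nu}\in\mathcal{L}^K_{[a,b]}$.

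\textbf{Main obstacle.} The delicate point is the exploration function. The text allows $f_a(t)=\log t+\log\log t$ for bounded support, whereas Theorems~\ref{theorem-1} and~\ref{theorem-3} are stated with $f_a(t)=\log t+2\log\log t+g(N_a(t))$. I would first simply run the algorithm with the exploration function of Theorem~\ref{theorem-1}, so that the corollary is an immediate instance of the theorem. If the simpler $f_a$ is desired, I would instead re-examine the proof of Theorem~\ref{theorem-3}. My expectation is that it uses $f_a$ only in two places: a lower bound of the form $f_a(t)\ge\log t$ to control underestimation of the optimal arm's index, and an upper bound $f_a(t)=\log t\,(1+o(1))$ uniformly for $N_a(t)\le t$ to bound suboptimal pulls. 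For bounded support, the sharper concentration for $\KLinf$ from \citet{cappe2013kullback} lets the $g(n)$ term be absorbed into $\log\log t$, so both properties still hold.
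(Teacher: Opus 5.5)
Your proposal follows the paper's route. The paper obtains Corollary~\ref{cor-3} by noting (Appendix~\ref{appendix:generality}) that $\mathcal{L}_{[a,b]}$ satisfies Assumptions~\ref{prop-1} and~\ref{prop-2} with $g(n)=\log(1+n)+1$, via the Honda--Takemura dual representation and a mixture-martingale bound; Generalized $\KLinf$-UCB with the corresponding thresholds is then optimal by Theorem~\ref{theorem-1}, and Theorem~\ref{theorem-3} applies verbatim. This is exactly your Steps 2--4 with your fallback choice of $f_a$.

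One caution on Step 3 concerns the case where the dual maximizer sits at the endpoint $\lambda^\ast=1/(1-x)$, equivalently $\mathbb{E}_\nu[(1-x)/(1-X)]\le 1$. There, any fixed shrinkage of the $\lambda$-range loses a positive amount of the supremum, so your bounded-integrand Hoeffding constant would have to degrade as $d\to 0$. It is cleaner to fix $\lambda^\ast$ and set $Z=\log\bigl(1-\lambda^\ast(X-x)\bigr)$, so that $\KLinf(\hat\nu_n,x)\ge\frac1n\sum_{s\le n}Z_s$ and $\mathbb{E}_\nu Z=\KLinf(\nu,x)$. A Chernoff bound then gives $e^{-nc_\nu d^2}$ uniformly in small $d$, because $Z$ is bounded above and $\mathbb{E}_\nu[e^{-Z}]<\infty$ in both the interior and the endpoint case.
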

Corollaries~\ref{cor-2} and~\ref{cor-3} of Theorem~\ref{theorem-3} follow since Generalized $\KLinf$-UCB with appropriate thresholds $f_a$ (with functions $g$ chosen as in Appendix~\ref{appendix:generality}), is optimal for the  two classes. 

\noindent\textbf{Tight Upper Bound with Discrimination Equivalence.} 
From Theorems~\ref{theorem-2} and~\ref{theorem-3}, we observe that for the class $\mathcal{L}$ of distributions satisfying~\eqref{eq:calL}, the regret tail lower and upper bounds do not always match. However, when $\mathcal{L}$ is additionally discrimination equivalent, the lower bound in~\eqref{eq:tail-lower-bound} simplifies to $-(i-1)$, exactly matching the upper bound in~\eqref{eq-4}. Thus, for discrimination-equivalent classes, the regret tail upper bound in Theorem~\ref{theorem-3} is tight.

This aligns with the intuition that discrimination-equivalent classes are hard to learn. The following corollary formalizes this and shows that even the optimal $\KLinf$-UCB algorithm exhibits heavy regret tails for such classes.


\begin{corollary}\label{cor-1}
    Let $\mathcal L_{DE}$ be a discrimination equivalent class that also satisfies the property in Equation~\eqref{eq:calL}. Let $\pi$ be the Generalized $\mathrm{KL}_{\inf}$-UCB algorithm for $\mathcal L_{DE}$. For $\gamma \in (0,1)$, let $\mathcal{D}_\gamma(T)=[\log^{1+\gamma}(T), (1-\gamma)T]$. Then, for each ${\boldsymbol{\nu}} \in \mathcal{L}^K_{DE}$, $\pi$ satisfies the following:
\end{corollary}
\begin{equation}
    \lim_{T \to \infty}
    \inf_{x \in \mathcal{D}_\gamma(T)}
    \frac{\log \mathbb{P}_{{\boldsymbol{\nu}}}^\pi\!\left(N_{i}(T) > x\right)}{\log x}
    \;=\; - (i-1).
\end{equation}

While we provide the first regret-tail upper bound for asymptotically optimal UCB algorithms over a broad class $\cal L$ beyond SPEFs, the upper bound in Theorem~\ref{theorem-3} is tight only for discrimination-equivalent subclasses $\mathcal L_{DE}$. In that case, the number of pulls of the second-best arm satisfies $\mathbb{P}^{\pi}_{\boldsymbol{\nu}}(N_2(T)>x) \approx x^{-1}$, which in turn implies $\mathbb{P}^{\pi}_{\boldsymbol{\nu}}(R(T)>x) \approx x^{-1}$, yielding Cauchy-like regret tails. By contrast, the moment-bounded class $\mathcal{L}_{B,\varepsilon}$ and the bounded-support class $\mathcal{L}_{[a,b]}$ are not discrimination equivalent (Examples~\ref{exmpl:moment-bounded} and~\ref{exmpl:bounded-support}), and so our current upper bound is not tight for these classes. We note that discrimination equivalence is sufficient but not necessary for tightness. The finitely-supported class, discussed in the following section, demonstrates that tight regret-tail behavior can arise even when this condition fails. Obtaining sharper regret-tail upper bounds for general classes $\cal L$ for the $\cal L$-optimized $\KLinf$-UCB remains an interesting direction for future work. 


\section{Tight Regret Tail Upper Bound for Finitely Supported Models}\label{bounded-finite-support}
In the preceding section, we provide the first regret tail characterization for a broad class of distributions $\mathcal{L}$. While our analysis is very general and can handle several non-parametric classes, our bounds are tight only when the class is also {discrimination equivalent}.  
In this section, we provide a much tighter analysis and upper bound for the special yet nontrivial case when $\cal L$ is the collection of distributions supported in a known support set. This corresponds to a finite-alphabet model, which allows the use of finite-sample large deviation bounds like Sanov-type inequalities.

Formally, let $\mathcal{L}_{[a,b],\mathcal{X}}$ denote the class of distributions supported on at most known and fixed $s<\infty$ points. Let $\mathcal X := \{x_1, \dots, x_s\} \subset [a,b]$ be known and fixed. Then, 
\[\mathcal{L}_{[a,b],\mathcal{X}} \;\triangleq\; \left\{\nu \in \mathcal{P}(\Re) : \operatorname{Supp}(\nu) \subseteq \mathcal X \right\}.\] 

\begin{theorem}[\textbf{Tight regret tail upper bound for finitely-supported class}]\label{theorem-4}
    Let $\pi$ be $\mathcal{L}^K_{[a,b],\mathcal{X}}$-optimal $\KLinf$-UCB algorithm with $f_a(t)=\log t + \log\log t$. For $\gamma \in (0,1)$, let $\mathcal{D}_\gamma(T)=[\log^{1+\gamma}(T),\, (1-\gamma)T ]$. Then, for each ${\boldsymbol{\nu}} \in \mathcal{L}^K_{[a,b],\mathcal{X}}$, and for the $i$-th best arm in $\boldsymbol{\nu}$,
    \begin{equation*}
        \lim_{T \to \infty}
    \inf_{x \in \mathcal{D}_\gamma(T)}
    \frac{\log \mathbb{P}_{{\boldsymbol{\nu}}}^\pi\!\left(N_{i}(T) > x\right)}{\log x}
    = - \sum_{j=1}^{i-1}
    \inf_{\substack{\widetilde{\nu} \in \mathcal{L}_{[a,b],\mathcal{X}}:\\ m(\widetilde{\nu}) < \mu_i}}
    \frac{\mathrm{KL}\!\left(\widetilde{\nu}, \nu_j\right)}
    {\mathrm{KL}_{\inf}\!\left(\widetilde{\nu}, \mu_i\right)}.
    \end{equation*}
\end{theorem}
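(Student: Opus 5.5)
Since Theorem~\ref{theorem-2} already gives the matching $\liminf$ lower bound (applied with $\mathcal M = \mathcal{L}_{[a,b],\mathcal{X}}$, for which $\KLinf$-UCB with $f_a(t)=\log t+\log\log t$ is optimal), it suffices to prove the $\limsup$ upper bound
\[
\limsup_{T\to\infty}\inf_{x\in\mathcal D_\gamma(T)}\frac{\log \mathbb{P}(N_i(T)>x)}{\log x}\le -\sum_{j<i} C_j,\qquad C_j:=\inf_{\widetilde\nu\in\mathcal{L}_{[a,b],\mathcal{X}}:\,m(\widetilde\nu)<\mu_i}\frac{\mathrm{KL}(\widetilde\nu,\nu_j)}{\KLinf(\widetilde\nu,\mu_i)}.
\]
It is enough to show $\mathbb{P}(N_i(T)>x)\le x^{-\sum_{j<i}C_j+o(1)}$ for $x$ in a suitable portion of $\mathcal D_\gamma(T)$, e.g.\ $x=\log^{1+\gamma}T$. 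Since the statement takes $\inf_x$, a single good choice of $x$ suffices. The structure mirrors the proof of Theorem~\ref{theorem-3}, but the crude per-arm bound $x^{-1}$ is replaced by a Sanov-type bound that keeps track of the empirical distribution.

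\textbf{Decomposition.} If $N_i(T)>x$, consider the last time $t\le T$ at which arm $i$ is pulled with $N_i(t)\approx x$. At that time $U_i\ge U_j$ for every $j<i$, and arm $i$'s index is at most roughly its value at sample size $x$. There are two sub-events:
\begin{itemize}
\item[(a)] Arm $i$'s empirical distribution $\hat\nu_{i,x}$ is atypical enough that $U_i$ is large. Its probability decays exponentially in $x$, by Assumption~\ref{prop-2} and Sanov, so it is negligible against any polynomial.
\item[(b)] For each $j<i$, arm $j$'s index fell below $\mu_i-\epsilon$ at some time.
\end{itemize}
For (b), condition on arm $j$ having $n$ samples with empirical law close to some $\widetilde\nu$ with $m(\widetilde\nu)<\mu_i$. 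Its index is below $\mu_i$ iff $n\,\KLinf(\hat\nu_{j,n},\mu_i)\ge\log t+\log\log t$. Moreover, arm $j$ remains stuck with fewer than $n$ pulls while arm $i$ accumulates $x$ pulls, which forces $\log t\gtrsim \log x$ (the index grows only logarithmically, so the elapsed time must be at least polynomial in $x$). Hence $n\gtrsim \log x/\KLinf(\widetilde\nu,\mu_i)$. By the finite-alphabet Sanov bound (method of types, with $(n+1)^{s}$ types), the probability that arm $j$'s first $n$ samples have type $\widetilde\nu$ is at most $(n+1)^s e^{-n\mathrm{KL}(\widetilde\nu,\nu_j)}$. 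This gives roughly $x^{-\mathrm{KL}(\widetilde\nu,\nu_j)/\KLinf(\widetilde\nu,\mu_i)}\le x^{-C_j}$. A union over types and over $n\le \log T$ costs only $\mathrm{poly}(\log T)=x^{o(1)}$ when $x\ge\log^{1+\gamma}T$.

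\textbf{Independence across arms.} Events for distinct $j<i$ concern disjoint i.i.d.\ reward streams (the stack-of-rewards model), so their probabilities multiply and give $x^{-\sum_j C_j+o(1)}$. The delicate point is that the times at which each arm $j$ is frozen are random and interdependent through the algorithm. I would handle this by bounding the event by the \emph{intersection over $j$} of the events ``there exists $n$ such that the type of $j$'s first $n$ samples satisfies $n\,\KLinf(\text{type},\mu_i)\ge(1-\epsilon)\log x$ and $m(\text{type})<\mu_i+\epsilon$''. Each such event depends only on arm $j$'s stack, so the events are independent, and a union over $n$ still costs only $x^{o(1)}$. Continuity of $\KLinf(\cdot,\mu_i)$ and $\mathrm{KL}(\cdot,\nu_j)$ on the simplex over $\mathcal X$ lets $\epsilon\to0$ recover $C_j$; this uses lower semicontinuity of the ratio and compactness of the simplex, with care at types where $\mathrm{KL}(\widetilde\nu,\nu_j)=\infty$, which are simply excluded.

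\textbf{Main obstacle.} The hardest step is justifying the reduction ``$N_i(T)>x$ implies that each better arm $j$ has a stuck sample size $n_j$ with $n_j\KLinf(\hat\nu_{j,n_j},\mu_i)\gtrsim\log x$.'' This requires a lower bound on the current time $t$ relative to $x$, uniformly over the path. I would first try the following: while arm $i$ is played with index below $\mu_i+\epsilon$ (outside event (a)), each pull of arm $i$ requires $U_j\le\mu_i+\epsilon$. Since the indices $U_j$ increase with $t$ while arm $j$ is frozen, freezing arm $j$ through roughly $x$ rounds requires $t\ge x$ at the end, hence $\log t\ge\log x$. This is the place where the finite support (via Sanov with only polynomial type-counting) is essential. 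The generic Theorem~\ref{theorem-3} loses exactly here by using only the time-uniform bound of Assumption~\ref{prop-1}.
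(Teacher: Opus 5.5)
Your overall route is the paper's. You fix the single level $x=\log^{1+\gamma}T$ and split $\{N_i(T)>x\}$ into two parts. The first is overestimation of arm $i$'s index at sample size $x$, which is exponentially small by Assumption~\ref{prop-2}. The second is, for each $j<i$, an underestimation event that depends only on arm $j$'s reward stack. You bound it via the finite-alphabet Sanov/type bound and the ratio inequality $n\,\mathrm{KL}(\widetilde\nu,\nu_j)\ge C_j\,n\,\KLinf(\widetilde\nu,\cdot)\ge C_j\log x$, and multiply over $j$.

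There is a genuine gap in the step that pins the exponent at exactly $-C_j$. You claim the union over $n\le\log T$ costs only $\mathrm{poly}(\log T)=x^{o(1)}$ when $x\ge\log^{1+\gamma}T$. For your choice $x=\log^{1+\gamma}T$, however, $\log T=x^{1/(1+\gamma)}$. The union bound therefore contributes a factor $(\log T)^{s+1}=x^{(s+1)/(1+\gamma)}$, which moves the exponent to $-C_j+(s+1)/(1+\gamma)$ and can make the bound vacuous. Moreover $N_j(t-1)$ can be as large as $T$, so the restriction to $n\le\log T$ is itself unjustified.

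The missing ingredient is a uniform lower bound on the Sanov rate. Any type compatible with $U_j\le\mu_i+\epsilon$ has mean below $\mu_i+\epsilon<\mu_j$, so $\mathrm{KL}(\cdot,\nu_j)\ge c_j>0$ uniformly on such types. The probability at sample size $n$ is then at most $(n+1)^s\min\{x^{-C_j},e^{-nc_j}\}$. Truncating the union at $n=\Theta(\log x)$ costs only $\mathrm{polylog}(x)=x^{o(1)}$, and the remaining range sums geometrically to $x^{-C_j+o(1)}$. This is exactly the paper's split at $s_T=\Theta(f(x_T))$, using $m\,\mathrm{KL}(\nu^*,\nu_1)\ge C_\nu f(x_T)+\tfrac m2\mathrm{KL}(\nu^*,\nu_1)$ for $m>s_T$. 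Taking $x$ polynomial in $T$ instead would make $\mathrm{polylog}(T)=x^{o(1)}$, but the range $n\gg\log T$ would still need the exponential-in-$n$ bound.

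Three smaller points.
\begin{itemize}
\item The step you call the main obstacle is immediate. At the $(x+1)$-th pull of arm $i$ one has $t-1\ge x$, and the index is nondecreasing in its time argument, so $U_j(N_j(t-1),t-1)\ge U_j(N_j(t-1),x)$. This is how the paper passes to $U_1(N_1(t-1),x_T)$. Finite support plays no role there; it enters only through the polynomial type count.
\item Your per-arm event mixes levels. The condition $U_j\le\mu_i+\epsilon$ yields $n\,\KLinf(\hat\nu_{j,n},\mu_i+\epsilon)\ge f(x)$, not $n\,\KLinf(\hat\nu_{j,n},\mu_i)\ge(1-\epsilon)\log x$. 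Types with mean in $[\mu_i,\mu_i+\epsilon)$ can satisfy the former while $\KLinf(\cdot,\mu_i)=0$, so your event does not contain the true one. Keep the level $\mu_i+\epsilon$ throughout, with $C_j$ replaced by its analogue at level $\mu_i+\epsilon$ (the paper's $C_\nu$ with $\mu_2+\delta$), and only then let $\epsilon\downarrow0$.
\item In (b), ``below $\mu_i-\epsilon$'' should read $\mu_i+\epsilon$; otherwise (a) and (b) do not cover $\{N_i(T)>x\}$.
\end{itemize}
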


\noindent \textit{Proof Outline:} The proof proceeds by decomposing the event $\mathbb{P}_{{\boldsymbol{\nu}}}^{\pi}\!\left(N_i(T) > x\right)$ into two complementary events. The first is controlled by partitioning according to the number of pulls of sub-optimal arms and a union bound, each of whose probabilities are bounded using a finite-sample version of Sanov’s theorem~\citep{dembo2010large, Csiszar2006Sanov}. This yields sharp exponential bounds. The second component is controlled using Assumption~\ref{prop-2} (finite-support models are examples of the model class $\mathcal{L}$). Combining these yields the desired bound. See Appendix~\ref{appendix:proof_theo4} for a complete proof.

\section{Conclusion, Limitations and Future Works}
In this work, we investigated the regret-tail behavior of stochastic multi-armed bandit algorithms that are optimal over a broad class of reward distributions, beyond SPEF classes.  Our results significantly extend earlier regret tail upper bounds previously established for SPEF models to much broader nonparametric classes, including moment-bounded and bounded-support distributions. We extend the $\mathrm{KL}_{\inf}$-UCB algorithm to a substantially more general class of reward distributions and establish its asymptotic optimality over this class. We further demonstrate the tightness of the proposed regret-tail upper bounds, both in the presence and in the absence of discrimination equivalence. The generality of our framework enables us to characterize regret-tail phenomena of a wide range of optimal algorithms across broad families of nonparametric bandit models. Despite these advances, several limitations remain. For the general distribution class $\mathcal{L}$, a gap persists between the regret-tail upper and lower bounds when discrimination equivalence does not hold. Closing this gap and designing algorithms that achieve improved tail robustness constitute important directions for future work. Further, the study of fragility issues in the Thompson sampling~\citep{SAgrawal2011TS1,riou2020bandit} remains a promising direction.

\section*{Acknowledgments}
Subhodip, a Ph.D student at the ECE department, is supported by Ministry of Education (MoE) fellowship. Shubhada acknowledges the generous support from the Pratiksha Trust, Bangalore, through the Young Investigator Award, and the DST INSPIRE Faculty Grant IFA24-ENG-389.


\bibliography{references}
\bibliographystyle{apalike}

\appendix

\section{Appendix}
\subsection{Generality of Distribution Class $\mathcal{L}$}\label{appendix:generality}
Recall from Section~\ref{sec:calL}, that class $\mathcal{L}$ is a collection of distributions that satisfies the following:
\[
\nu  \in \mathcal{L} \implies \nu \text{  satisfies Assumptions~\ref{prop-1} and~\ref{prop-2}}.\]

Two commonly studied distribution classes that are examples of $\mathcal{L}$ are presented below. 

\noindent\ \textbf{Moment bounded distributions:} For $\varepsilon>0$ and $B>0$, define class $\mathcal L_{B,\varepsilon}$ as:
\[
\mathcal{L}_{B,\varepsilon}
\;\triangleq\; \left\{ \nu \in \mathcal{P}(\Re)
:\; \mathbb{E}_{\nu}\!\left[|X|^{1+\varepsilon}\right] \le B \right\}.
\]
Every element $\nu \in \mathcal L_{B,\varepsilon}$ satisfies Assumptions~\ref{prop-1} and~\ref{prop-2} with $g(n) = 2\log(1+n) + 1$ (see, \citet[Proposition~5]{agrawal21heavytailed} and \citet[Lemma~6]{agrawal21heavytailed} for details).

\noindent\ \textbf{Bounded support distributions:}
For constants $a,b \in \Re$, define class $\mathcal L_{[a,b]}$ as: 
\[
\mathcal{L}_{[a,b]} \;\triangleq\;
\left\{ \nu \in \mathcal{P}(\Re):\;
\operatorname{Supp}(\nu) \subseteq [a,b] \right\}.
\]
This distribution class also satisfies the above assumptions with $g(n) = \log(1+n) + 1$. The proof follows along the same lines as in \citet{agrawal21heavytailed}, with an appropriate modification that employs the dual representation of $\KLinf$ for bounded-support distributions as developed in \citet[Section~4]{HondaBounded10} along with \citet[Lemma F.1]{agrawal2021optimal}.

\subsection{Proof of Theorem~\ref{theorem-1}}\label{appendix:proof_theo1}

This proof follows along the lines of \citet[Theorem~1]{agrawal21heavytailed}, specialized to batches of size one per trial. Take \( t \ge K+1 \), and without loss of generality, assume that arm~1 is optimal. 
Then the event that, at the beginning of round \(t\), a sub-optimal arm \(a \neq 1\) attains the maximum index i.e., the event \(\{A_t = a\}\) is contained in
\begin{align} \label{eq-5}
\{U_1(N_1(t),t) \leq \mu_1 ~~\text{ and }~~ A_{t}=a \}~ \bigcup ~\{U_a(N_a(t),t) > \mu_1 ~~\text{ and }~~ A_{t}=a \}.
\end{align}
The left-hand event of Equation~\eqref{eq-5} characterizes the underestimation of the optimal arm’s index relative to its true mean at time $t$, while the right-hand event corresponds to an overestimation of the sub-optimal arm’s index beyond the mean of the optimal arm. Recall that during the initial $K$ rounds, each arm is played exactly once as part of the initialization procedure. Thus,
\[ N_a(T) = 1+ \sum_{t=K+1}^T \mathbb{I}\{A_{t}=a\}, \]
$$\mathbb{E}[N_a(T)] \leq  ~ 1 + \mathbb{E}[D_T] + \mathbb{E}[E_T].$$ 
The terms $D_T$ and $E_T$ are as follows.
\[ D_T := \sum\limits_{t=K+1}^{T} \mathbb{I}\lrp{{U_1(N_1(t),t) \leq \mu_1, ~ A_{t} = a}}, ~\text{and }~ E_T := \sum\limits_{t=K+1}^{T} \mathbb{I}({U_a(N_a(t),t)> \mu_1, ~ A_{t}=a}). \]

\noindent \textbf{Bounding the overestimation of sub-optimal arms in $E_T$:}  
By the definition of the index employed by the algorithm, for any \(t\ge K+1\) and \(x\in\mathbb{R}\), the event  
\(\{U_a(N_a(t),t)\ge x\}\) is equivalent to  
\(\{N_a(t)\,\mathrm{KL}_{\inf}(\hat{\nu}_a(t),x)\le f_a(t)\}\).  
Let 
$$ 0 < d \le \min_{a>1}\mathrm{KL}_{\inf}(\nu_a,\mu_1).$$ Then the indicator of the event  
\(\{U_a(N_a(t),t)\ge \mu_1,\; A_t=a\}\) is dominated by the sum of the two events \(E_{1t}\) and \(E_{2t}\) defined below:

\[E_{1t} = \mathbb{I}\lrp{\mathrm{KL}_{\inf}({\hat{\nu}_a(t), \mu_1}) \leq \frac{f_a(t)}{N_a(t)},~\mathrm{KL}_{\inf}({\hat{\nu}_a(t), \mu_1}) > \mathrm{KL}_{\inf}(\nu_a,\mu_1)-d,~ A_{t} = a }, \]
\[E_{2t} = \mathbb{I}\lrp{\mathrm{KL}_{\inf}({\hat{\nu}_a(t), {\mu_1}}) \leq \mathrm{KL}_{\inf}(\nu_a,{\mu_1})-d, ~ A_{t}=a }.\]
Thus, 
\begin{align*}
E_T \le &\sum\limits_{t=K+1}^{T} E_{1t}+\sum\limits_{t=K+1}^{T}E_{2t}.
\end{align*}
We can clearly see that \(E_{1t}\), is bounded above by $\mathbb{I}\lrp{N_a(t)\lrp{\mathrm{KL}_{\inf}({\nu_a,\mu_1})-d} \leq f_a(t), A_{t}=a  },$  giving 
\begin{align}\label{eq-19}
    \sum\limits_{t=K+1}^{T}E_{1t} \leq \sum\limits_{t=1}^{T}E_{1t} \leq \sum\limits_{t=1}^{T}\mathbb{I}\lrp{{N_a(t)} \leq \frac{f_a(t)}{\mathrm{KL}_{\inf}({\nu_a,\mu_1})-d}, A_{t}=a  }.
\end{align}

Now, in order to upper bound the RHS of Equation~\eqref{eq-19}, we use the following lemma from \citet{agrawal21heavytailed}. For completeness, we state the result as follows.
\begin{lemma}\label{lem:BoundingSumE1j}For \( T\ge K+1 \), and \( d>0 \), 
	\[
	\sum\limits_{j=1}^{T} E_{1j} \le \lrp{\frac{\log(T)}{\mathrm{KL}_{\inf}(\nu_a,\mu_1)-d}+ O\lrp{\log\log(T)}}. 
	\]
\end{lemma}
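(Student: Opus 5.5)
We need to prove Lemma \ref{lem:BoundingSumE1j}: the number of rounds $t\le T$ at which arm $a$ is pulled while $N_a(t)\le f_a(t)/(\mathrm{KL}_{\inf}(\nu_a,\mu_1)-d)$ is at most $\log T/(\mathrm{KL}_{\inf}(\nu_a,\mu_1)-d)+O(\log\log T)$. Write $\kappa:=\mathrm{KL}_{\inf}(\nu_a,\mu_1)-d$; if $d$ equals the minimum we would take $d$ slightly smaller, so we assume $\kappa>0$. The approach is a deterministic counting argument. Each time the indicator in \eqref{eq-19} is active, arm $a$ is pulled, so $N_a$ increases by exactly one. Hence the active rounds correspond to distinct values of $N_a(t)$, and the sum is bounded by the largest value $n$ that $N_a(t)$ can take while satisfying $n\le f_a(t)/\kappa$ for some $t\le T$.

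\textbf{Step 1 (monotonicity in $t$).} Bound $f_a(t)$ uniformly over $t\le T$. Here $f_a(t)=\log t+2\log\log t+g(N_a(t))$ with $g(n)=C_1\log(1+n)+C_2$, and we have $\log t+2\log\log t\le \log T+2\log\log T$. So on each active round,
\[
N_a(t)\,\kappa\le \log T+2\log\log T+C_1\log(1+N_a(t))+C_2 .
\]

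\textbf{Step 2 (solving the implicit inequality).} Let $n^*$ be the largest integer $n$ with $n\kappa\le \log T+2\log\log T+C_1\log(1+n)+C_2$. The map $n\mapsto n\kappa-C_1\log(1+n)$ is eventually increasing, so the set of such $n$ is bounded.
\begin{itemize}
\item A crude bound first: the inequality forces $n\le c\log T$ for some constant $c$ and all large $T$.
\item Substituting back gives $C_1\log(1+n)\le C_1\log(1+c\log T)=O(\log\log T)$.
\item Therefore $n^*\le \dfrac{\log T+O(\log\log T)}{\kappa}$.
\end{itemize}

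\textbf{Step 3 (counting).} The active rounds have pairwise distinct values $N_a(t)\in\{1,\dots,n^*\}$, where $N_a$ counts pulls before round $t$. Hence
\[
\sum_t E_{1t}\le n^*+1=\frac{\log T}{\kappa}+O(\log\log T).
\]
If $N_a(t)$ is instead read as including the current pull, the count shifts by at most one, which is absorbed in the $O(\cdot)$ term.

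The main obstacle is Step 2: the self-referential dependence of the threshold on $N_a(t)$ through $g(N_a(t))$. It is handled by the two-stage bootstrapping above, which works because $g$ grows only logarithmically. Everything else is bookkeeping.
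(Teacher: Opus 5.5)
Your proof is correct, and it is the standard deterministic counting argument: bound $f_a(t)$ by its value at $T$, note that active rounds carry distinct values of $N_a(t)$, and bootstrap the self-referential $C_1\log(1+N_a)$ term down to $O(\log\log T)$. The paper gives no proof of its own here and defers to \citet[Lemma 14]{agrawal21heavytailed}; the one point worth stating explicitly is that your $O(\log\log T)$ constant depends on $d$ only through $\kappa$ and stays uniform while $\kappa$ is bounded away from $0$, which the paper needs when it later sends $d\to 0$ with $T$.
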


We use the above Lemma \ref{lem:BoundingSumE1j} to get the following bound:  
\[		\sum\limits_{t=1}^{T}E_{1t}\le {\frac{\log(T)}{\mathrm{KL}_{\inf}(\nu_a,\mu_1)-d}+ O({\log\log(T)})}.
\]
For the exact form of the \( O(\log\log(T)) \) term above, we refer the reader to the proof of \citet[Lemma 14]{agrawal21heavytailed}.




Note that for any constant $c>0$, $1-e^{-c} \geq \frac{c}{1+c}$ because $e^c \geq 1+c$. Now in order to control the events in $E_{2t}$ we directly use Assumption~\ref{prop-2} with $\delta=\mu_1-\mu_a$ to get an upper bound as follows.
\begin{align*}
    \Exp{\sum\limits_{t=K+1}^{T}E_{2t}} &\leq \sum\limits_{t=K+1}^{T} \mathbb{P}(\mathrm{KL}_{\inf}({\hat{\nu}_a(t), {\mu_1}})  \leq \mathrm{KL}_{\inf}(\nu_a,{\mu_1})-d) \\ 
    &\leq \sum\limits_{t=K+1}^{T} e^{-tc_{\nu_a}d^2} \\
    &\leq \sum\limits_{t=2}^{\infty} e^{-tc_{\nu_a}d^2} \\
    &= \frac{e^{-2c_{\nu_a}d^2}}{1-e^{-c_{\nu_a}d^2}} \\
    &\leq 1+ \frac{1}{c_{\nu_a} d^2}.
\end{align*}

\noindent\textbf{Bounding underestimation of the optimal arm in \(D_T\):} This term contributes only a constant amount to the regret up to time $T$. To bound it, we can use \citet[Lemma 18]{agrawal21heavytailed}, which we restate below for completeness. Here we use Assumption~\ref{prop-1} for the proof of the following lemma.
\begin{lemma}\label{lem:BoundingDN}
	For \( T > K \), 
	\[ \Exp{D_T}  \le \frac{1}{\log(K+1)}.
	\]
\end{lemma}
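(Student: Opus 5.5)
We need to prove Lemma \ref{lem:BoundingDN}: $\mathbb{E}[D_T]\le 1/\log(K+1)$, where $D_T$ counts rounds $t\ge K+1$ at which the optimal arm's index is below $\mu_1$ and some suboptimal arm $a$ is pulled.

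The plan is to discard the event $\{A_t=a\}$ and bound the probability that the optimal arm's index falls below its mean. By the definition of the index, $U_1(N_1(t),t)\le\mu_1$ forces $N_1(t)\KLinf(\hat{\nu}_1(t),\mu_1)\ge f_1(t)$. This uses that $\KLinf(\hat\nu,\cdot)$ is nondecreasing. Strictly one needs $\ge$ at $\mu_1$ itself, which I would obtain via lower semicontinuity of $\KLinf$ in its second argument, or else work with $\mu_1-\epsilon$ and let $\epsilon\to0$.

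Now take $f_1(t)=\log t+2\log\log t+g(N_1(t))$, so the term $g(N_1(t))$ sits inside the threshold. Write $\hat\nu_{1,n}$ for the empirical distribution after $n$ pulls of arm 1. Then
\[
\{U_1(N_1(t),t)\le\mu_1\}\subseteq\Bigl\{\exists n\ge1:\ n\KLinf(\hat{\nu}_{1,n},\mu_1)-g(n)\ge \log t+2\log\log t\Bigr\}.
\]
The right-hand side no longer depends on the random $N_1(t)$, since $n$ ranges over all sample sizes. This is exactly why the time-uniform Assumption~\ref{prop-1} is needed. Applying it with $x=\log t+2\log\log t$ gives
\[
\mathbb{P}\bigl(U_1(N_1(t),t)\le\mu_1\bigr)\le \frac{1}{t\log^2 t}.
\]

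Summing over rounds,
\[
\mathbb{E}[D_T]\le\sum_{t\ge K+1}\frac{1}{t\log^2 t}\le\int_{K}^{\infty}\frac{ds}{s\log^2 s}=\frac{1}{\log K}.
\]
The integral comparison holds because $1/(s\log^2 s)$ is decreasing. This gives $1/\log K$ rather than the stated $1/\log(K+1)$. I expect the indexing to fix it: if the index used at round $t+1$ is computed with $t$, the sum runs over $t\ge K+1$ with threshold $\log t$, so the integral starts at $K+1$ and yields the claimed $1/\log(K+1)$. At worst the constant changes, which does not affect Theorem~\ref{theorem-1}.

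The main obstacle is the inclusion step. It needs care in two places: handling the event at equality, and avoiding reliance on attainment of the supremum defining $U_1$.
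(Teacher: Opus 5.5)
Your argument follows the paper's. The paper sums over the pull times $T_a^r\ge K+r-1$ of arm $a$ instead of over rounds. Otherwise it does what you do: it uses the time-uniformity of Assumption~\ref{prop-1} to absorb the random $N_1$, bounds each term by $h(n):=1/(n\log^2 n)$, and ends with a truncation of the same series $\sum_{n\ge K+1}h(n)$. Your integral comparison, giving $1/\log K$, is the correct one.

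What fails is your proposed repair for the stated constant. No reindexing helps. Since $h$ is decreasing, $h(n)\ge\int_n^{n+1}h(s)\,ds$, and hence
\[
\sum_{n\ge K+1}h(n)\;>\;\int_{K+1}^{\infty}h(s)\,ds=\frac{1}{\log(K+1)}.
\]
For $K=2$ the series is about $1.07$, while $1/\log 3\approx 0.91$, so the partial sums exceed $1/\log(K+1)$ once $T$ is large. The reasoning ``the sum starts at $K+1$, so the integral starts at $K+1$'' is exactly the wrong-direction comparison.

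The paper's final step, $\sum_{r=2}^{T}h(K+r-1)\le\int_2^\infty h(K+r-1)\,dr=1/\log(K+1)$, makes the same mistake. So neither argument establishes $1/\log(K+1)$, and this route cannot. Its honest conclusion is $\mathbb{E}[D_T]\le 1/\log K$, which is finite since $K\ge 2$. State the lemma that way and drop the reindexing remark. As you say, only finiteness of this constant matters for Theorem~\ref{theorem-1}.

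A smaller point concerns the boundary case. Because $\KLinf(\hat\nu_1(t),\cdot)$ is nondecreasing, lower semicontinuity in the second argument is just left-continuity. But from $U_1\le\mu_1$ you only learn that $\KLinf(\hat\nu_1(t),x)>f_1(t)/N_1(t)$ for all $x>\mu_1$. Passing to $x=\mu_1$ needs right-continuity, not lower semicontinuity.

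The clean fix avoids both continuity and attainment of the supremum. Since $A_t=a$ forces $U_a\ge U_1$, you may split on $\{U_1<\mu_1\}$ and $\{U_a\ge\mu_1\}$. The inclusion $\{U_1(N_1(t),t)<\mu_1\}\subseteq\{N_1(t)\KLinf(\hat\nu_1(t),\mu_1)>f_1(t)\}$ is then immediate from the definition of $U_1$ as a supremum. Left-continuity is exactly what handles the remaining boundary case $U_a=\mu_1$ on the $E_T$ side, so your semicontinuity remark belongs there.
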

\begin{proof}[Proof of Lemma~\ref{lem:BoundingDN}]
    Let's denote the time when arm $a$ is played for $r^{th}$ time by $T^r_a$. Note that $T^r_a \geq K+r-1$ because arm $a$ is played for $r-1$ rounds after each arm has been played once. Now $D_T$ can be rewritten as follows:
    \begin{align*}
         D_T &= \sum^{N_a(T)}_{r=2} \mathbb{I}\lrp{U_1(N_1(T^r_a),T^r_a) \leq \mu_1} \\
    &\leq  \sum^{T}_{r=2} \mathbb{I}\lrp{U_1(N_1(T^r_a),T^r_a) \leq \mu_1}
    \end{align*}
    \begin{align*}
        \mathbb{E}[D_T] &\leq \sum^{T}_{r=2} \mathbb{P}\lrp{U_1(N_1(T^r_a),T^r_a) \leq \mu_1} \\
        &= \sum^{T}_{r=2} \mathbb{P}\lrp{\KLinf(\hat\nu_1(T^r_k),\mu_1) \geq \frac{f_1(T^r_a)}{N_1(T^r_a)}} \\
        &= \sum^{T}_{r=2} \mathbb{P}\lrp{N_1(T^r_a)\KLinf(\hat\nu_1(T^r_a),\mu_1) \geq \log(T^r_a) + 2\log\log(T^r_a) + g(N_1(T^r_a)) } \\
        &\leq \sum^{T}_{r=2} \mathbb{P}\bigg(N_1(T^r_a)\KLinf(\hat\nu_1(T^r_a),\mu_1) - g(N_1(T^r_a)) \\
        &\quad \hspace{2cm}\geq \log(K+r-1) + 2\log\log(K+r-1) \bigg) \\
        &\leq \sum^{T}_{r=2} \frac{1}{K+r-1} \frac{1}{(\log(K+r-1))^2} \numberthis \label{eq:res} \\
        & \leq \int^\infty_2 \frac{1}{K+r-1} \frac{1}{(\log(K+r-1))^2} dr \\
        &= \frac{1}{\log(K+1)}
    \end{align*}
Equation~\eqref{eq:res} is a direct implication of Assumption~\ref{prop-1}.
\end{proof}
 Combining everything, we get 
\begin{align}\label{eq-7}
    \Exp{N_a(T)} \le  {\frac{\log(T)}{\mathrm{KL}_{\inf}\lrp{\nu_a,\mu_1}-d} +O(\log\log(T))+ \frac{1}{c_{\nu_a} d^2} + \frac{1}{(\log(K+1))} +2 .}
\end{align} 
The above bound in Equation~\eqref{eq-7} can be optimized over \( d \). Setting \( d \) to as below: \[ d = \lrp{c'_{\nu_a} \frac{(\mathrm{KL}_{\inf}(\nu_a,\mu_1))^2 }{ \log T}}^{1/3}, \] where \(c'_{\nu_a} = 2o(1)/c_{\nu_a} \) we get that 
\[\mathbb{E}[{N_a(T)}] \leq {\frac{\log T}{\mathrm{KL}_{\inf}(\nu_a,\mu_1)} + \frac{3(\log T)^{2/3} (c'_{\nu_a})^{1/3}}{2(\mathrm{KL}_{\inf}(\mu_a,m))^{4/3}} + O((\log  T)^{1/3}) + O(\log\log(T))}. \]
Finally, taking the limit, we get,

\[
\limsup\limits_{T\rightarrow \infty}\frac{\mathbb{E}[N_a(T)]}{\log(T)} \leq \frac{1}{\mathrm{KL}_{\inf}(\nu_a,\mu_1)}.
\] 

\subsection{Proof of Theorem~\ref{theorem-2}}\label{appendix:proof_theo2}
\noindent \textbf{Two-arm setting:} We first establish the lower bound for the two-armed bandit setting and subsequently extend the argument to the general multi-armed case. Without loss of generality, assume that the mean rewards satisfy \(\mu_1 > \mu_2\). Let \(\mathcal{M}\) denote an arbitrary class of reward distributions.
\begin{lemma}\label{lem:fanLB}
    Let $\pi$ be $\mathcal{M}^2$-optimal algorithm. Let $\gamma \in (0,1)$. Then, for any ${\boldsymbol{\nu}} \in \mathcal{M}^2$, for the suboptimal arm $2$,
    \begin{align*}
    \liminf_{T\to\infty}
    \frac{\log\mathbb{P}_{{\boldsymbol{\nu}}}^{\pi}(N_2(T)>(1-\gamma) T)}{\log T}
    \ge
    - \inf_{\substack{\widetilde{\nu}_1 \in \mathcal M: \\ \mathbb{E}[\widetilde{\nu}_1] \leq \mu_2 }} \frac{\mathrm{KL}(\widetilde\nu_1,\nu_1)}{\mathrm{KL}_{\inf}(\widetilde\nu_1,\mu_2)}.
    \end{align*}
\end{lemma}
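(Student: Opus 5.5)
Fix an arbitrary $\widetilde\nu_1\in\mathcal M$ with $m(\widetilde\nu_1)\le\mu_2$ and $\mathrm{KL}(\widetilde\nu_1,\nu_1)<\infty$ (otherwise there is nothing to prove). We argue by change of measure, following the SPEF argument of \citet{FanGlynn2024Fragility}. Consider the alternative instance $\widetilde{\boldsymbol\nu}=(\widetilde\nu_1,\nu_2)$, in which arm~2 is optimal and arm~1 is suboptimal. If $m(\widetilde\nu_1)=\mu_2$, perturb first to obtain strict inequality; this costs an arbitrarily small amount in the ratio, and we pass to the infimum at the end.

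Since $\pi$ is $\mathcal M^2$-optimal, the identity \eqref{eq:lai-robbins-lower-bound} applied to $\widetilde{\boldsymbol\nu}$ gives
\[
\mathbb E_{\widetilde{\boldsymbol\nu}}[N_1(T)] = \frac{(1+o(1))\log T}{\mathrm{KL}_{\inf}(\widetilde\nu_1,\mu_2)}.
\]
Markov's inequality then shows that the event
\[
G_T=\Bigl\{N_1(T)\le \tfrac{(1+\eta)\log T}{\mathrm{KL}_{\inf}(\widetilde\nu_1,\mu_2)}\Bigr\}
\]
has $\mathbb P_{\widetilde{\boldsymbol\nu}}(G_T)$ bounded below by a constant $c_\eta>0$ for fixed small $\eta$. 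We would rather have probability close to $1$, and the most direct route to it is a second-moment or concentration bound. If that is unavailable, a constant suffices, because only exponents in $\log T$ matter.

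On $G_T$ we have $N_2(T)=T-N_1(T)>(1-\gamma)T$ for all large $T$, so $G_T\subseteq\{N_2(T)>(1-\gamma)T\}$. The two instances differ only in arm~1, so the likelihood ratio is
\[
L_T=\frac{d\mathbb P_{\widetilde{\boldsymbol\nu}}}{d\mathbb P_{\boldsymbol\nu}}=\exp\Bigl(\sum_{s=1}^{N_1(T)}\log\tfrac{d\widetilde\nu_1}{d\nu_1}(X_{1,s})\Bigr).
\]
Writing $\mathbb P_{\boldsymbol\nu}(A)=\mathbb E_{\widetilde{\boldsymbol\nu}}[L_T^{-1}\mathbb I_A]$, we intersect $G_T$ with the event
\[
H_T=\Bigl\{\max_{n\le n_T}\Bigl(\textstyle\sum_{s\le n}\log\tfrac{d\widetilde\nu_1}{d\nu_1}(X_{1,s})-n\,\mathrm{KL}(\widetilde\nu_1,\nu_1)\Bigr)\le \eta\log T\Bigr\},
\qquad n_T=\tfrac{(1+\eta)\log T}{\mathrm{KL}_{\inf}(\widetilde\nu_1,\mu_2)}.
\]
On $G_T\cap H_T$,
\[
L_T\le \exp\bigl(n_T\,\mathrm{KL}(\widetilde\nu_1,\nu_1)+\eta\log T\bigr)=T^{(1+\eta)C+\eta},
\qquad C=\mathrm{KL}(\widetilde\nu_1,\nu_1)/\mathrm{KL}_{\inf}(\widetilde\nu_1,\mu_2).
\]
Hence
\[
\mathbb P_{\boldsymbol\nu}\bigl(N_2(T)>(1-\gamma)T\bigr)\ge T^{-(1+\eta)C-\eta}\,\mathbb P_{\widetilde{\boldsymbol\nu}}(G_T\cap H_T).
\]

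\textbf{Main obstacle: showing $\mathbb P_{\widetilde{\boldsymbol\nu}}(H_T)\to1$.} Under $\widetilde{\boldsymbol\nu}$ the log-likelihood increments are i.i.d.\ with mean $\mathrm{KL}(\widetilde\nu_1,\nu_1)$, so the maximal strong law of large numbers, applied to the centred random walk over $n\le n_T\asymp\log T$, makes the deviation $o(\log T)$ almost surely. This gives $\mathbb P_{\widetilde{\boldsymbol\nu}}(H_T)\to1$, and therefore $\mathbb P_{\widetilde{\boldsymbol\nu}}(G_T\cap H_T)\ge c_\eta/2$ eventually. The step needs only $\mathrm{KL}<\infty$, that is, integrability of the positive part of the log-likelihood ratio under $\widetilde\nu_1$. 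The samples of arm~1 can be realised on a stack of i.i.d.\ draws (reward tables), so that the maximal bound applies uniformly regardless of the adaptive sampling.

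Taking logarithms, dividing by $\log T$, letting $T\to\infty$ and then $\eta\to0$ gives the exponent $-C$. Taking the infimum over $\widetilde\nu_1$ completes the proof.
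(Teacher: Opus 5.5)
Your proof is correct. It uses the same skeleton as the paper: a change of measure to $\widetilde{\boldsymbol\nu}=(\widetilde\nu_1,\nu_2)$, followed by a bound on the likelihood ratio on a suitable event. The difference is the key probabilistic input.

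\textbf{The paper's route.} The paper imports the concentration result Lemma~\ref{lem:mp-optimal-pulls}, namely $N_1(T)/\log T\to 1/\KLinf(\widetilde\nu_1,\mu_2)$ in $\mathbb{P}^\pi_{\widetilde{\boldsymbol\nu}}$-probability. It combines this with a weak law for the randomly indexed log-likelihood sum to get $\mathbb{P}^\pi_{\widetilde{\boldsymbol\nu}}(L_T<c_T)\to 0$ and $\mathbb{P}^\pi_{\widetilde{\boldsymbol\nu}}(E)\to 1$. It then concludes from $\mathbb{P}^\pi_{\boldsymbol\nu}(E)\ge e^{c_T}\bigl(\mathbb{P}^\pi_{\widetilde{\boldsymbol\nu}}(E)-\mathbb{P}^\pi_{\widetilde{\boldsymbol\nu}}(L_T<c_T)\bigr)$.

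\textbf{Your route.} You use only the expectation-level definition of optimality, and in fact only its upper half $\limsup_T \mathbb{E}_{\widetilde{\boldsymbol\nu}}[N_1(T)]/\log T\le 1/\KLinf(\widetilde\nu_1,\mu_2)$, on the single instance $\widetilde{\boldsymbol\nu}$. Markov's inequality then gives a good event of probability about $\eta/(1+\eta)$ rather than close to one. This costs nothing once you divide by $\log T$.

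\textbf{What each buys.} Your argument is self-contained and needs a weaker hypothesis. Lemma~\ref{lem:mp-optimal-pulls} is a nontrivial imported result: its lower half rests on a separate Lai--Robbins-type change of measure that uses the algorithm's behaviour on other instances. Your reward-table maximal SLLN also makes rigorous the law of large numbers with the random, sample-dependent index $N_1(T)$, which the paper invokes informally. The paper's route gives the sharper picture that the good event has probability tending to one, but the exponent does not need it.

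\textbf{Two small corrections.}
\begin{itemize}
\item The perturbation in the case $m(\widetilde\nu_1)=\mu_2$ is unnecessary, and it is not generally available in an arbitrary class $\mathcal M$. In that case $\widetilde\nu_1$ is itself feasible in the definition of $\KLinf(\widetilde\nu_1,\mu_2)$, so that quantity is $0$. Meanwhile $\mathrm{KL}(\widetilde\nu_1,\nu_1)>0$ because the means differ. The ratio is therefore $+\infty$, and such $\widetilde\nu_1$ do not affect the infimum.
\item The identity $\mathbb{P}_{\boldsymbol\nu}(A)=\mathbb{E}_{\widetilde{\boldsymbol\nu}}[L_T^{-1}\mathbb{I}_A]$ holds in general only as $\ge$, unless $\nu_1\ll\widetilde\nu_1$. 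The inequality $\ge$ is all you use, so nothing breaks.
\end{itemize}
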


\begin{proof}[Proof of Lemma~\ref{lem:fanLB}]   
Consider a two-armed stochastic bandit problem with an environment \({\boldsymbol{\nu}} = (\nu_1,\nu_2)\). Introduce an alternative environment \(\widetilde{\boldsymbol{\nu}} = (\widetilde{\nu}_{1}, \nu_{2}) \in \mathcal{M}^2\)  with means $(\widetilde{\mu}_1, \mu_2)$
that satisfy \(\widetilde{\mu}_1 < \mu_2\); hence, arm \(1\) becomes sub-optimal under \(\widetilde{\boldsymbol{\nu}}\). 
For a fixed \(\gamma \in (0,1)\) and  $T\ge 1$, define the event 
$$E = \{ N_2(T) > (1-\gamma)T \}.$$ 
Let $Y_a(s)$ denote the reward from $s^{\text{th}}$ pull of arm $a$.  Applying a change-of-measure argument from \({\boldsymbol{\nu}}\) to \(\widetilde{\boldsymbol{\nu}}\), we obtain the following: 
\begin{equation}\label{eq:likelihood}
\mathbb{P}_{{\boldsymbol{\nu}}}^{\pi}\bigl(E\bigr)
= \underset{\mathbb{P}_{\widetilde{\boldsymbol{\nu}}}^{\pi}}{\mathbb{E}} \lrp{\mathbb{I}(E)\frac{d\mathbb{P}_{{\boldsymbol{\nu}}}^{\pi}}{d\mathbb{P}_{\widetilde{\boldsymbol{\nu}}}^{\pi}}}
     = \int_{E} e^{L_T({\boldsymbol{\nu}},\widetilde{\boldsymbol{\nu}})} d\mathbb{P}^{\pi}_{\widetilde{\boldsymbol{\nu}}},
\end{equation}
where the log-likelihood‐ratio process $L_T({\boldsymbol{\nu}},\widetilde{\boldsymbol{\nu}})$ is as follows:
\[
L_T({\boldsymbol{\nu}},\widetilde{\boldsymbol{\nu}}):=
\log \bigg(\frac{d\mathbb{P}_{{\boldsymbol{\nu}}}^{\pi}}{d\mathbb{P}_{\widetilde{\boldsymbol{\nu}}}^{\pi}}\bigg)=
\log \bigg(\prod_{s=1}^{N_1(T)}
  \frac{d{\nu_1}}{d{\widetilde{\nu}_1}} \bigl(Y_1(s)\bigr)\bigg) = \sum_{s=1}^{N_1(T)}
    \log \biggl(\frac{d{\nu_1}}{d{\widetilde{\nu}_1}}\bigl(Y_1(s)\bigr)\biggr).
\]


Now we use the following result from \citet{FanGlynn2024Fragility}, which is stated below for completeness.
\begin{lemma}\label{lem:mp-optimal-pulls}
Let $\pi$ be an $\mathcal{M}^K$-optimal algorithm. Then, for any $\boldsymbol{\nu} \in \mathcal{M}^K$, and each sub-optimal arm $i$,
\begin{align*}
    \frac{N_i(T)}{\log T}
    \;\xrightarrow[T\to\infty]{\mathbb{P}_{\boldsymbol{\nu}}^{\pi}}\;
    \frac{1}{\mathrm{KL}_{\inf}\!\left(\nu_i,\mu_1\right)}.
\end{align*}
\end{lemma}
In $\widetilde{\boldsymbol{\nu}}$, since arm 1 is suboptimal, using Lemma~\ref{lem:mp-optimal-pulls},
\begin{equation}\label{eq-9}
\frac{N_1(T)}{\log T} \xrightarrow[T \to \infty]{\mathbb{P}_{\widetilde{\boldsymbol{\nu}}}^{\pi}}\; \frac{1}{\mathrm{KL}_{
\inf
}(\widetilde\nu_1,\mu_2)}.
\end{equation}

Further, under \(\widetilde{\boldsymbol{\nu}}\), by the weak law of large numbers as \(T\to\infty\),
\begin{equation}\label{eq-10}
\frac{1}{N_1(T)} \sum_{s=1}^{N_1(T)} 
    \log \biggl(\frac{d{\nu_1}}{d{\widetilde{\nu}_1}}\bigl(Y_1(s)\bigr)\biggr)
\;\xrightarrow{\mathbb{P}_{\widetilde{\boldsymbol{\nu}}}^{\pi}}\;
-\,\mathrm{KL}(\widetilde\nu_1,\nu_1).
\end{equation}
Combining Equations~\eqref{eq-9} and~\eqref{eq-10}, we get that under  $\widetilde{\boldsymbol{\nu}}$, 
\begin{equation*}
\frac{L_T({\boldsymbol{\nu}},\widetilde{\boldsymbol{\nu}})}{\log T} = \frac{N_1(T)}{\log T} \cdot \frac{1}{N_1(T)}\sum_{s=1}^{N_1(T)} 
    \log \biggl(\frac{d{\nu_1}}{d{\widetilde{\nu}_1}}\bigl(Y_1(s)\bigr)\biggr)
\;\xrightarrow[T \to \infty]{\mathbb{P}_{\widetilde{\boldsymbol{\nu}}}^{\pi}}\;
- \, \frac{\mathrm{KL}\left(\widetilde{\nu}_1, \nu_1\right)}{\mathrm{KL}_{\inf}\!\left(\widetilde{\nu}_1, \mu_2\right)} 
    .
\end{equation*}
From above, for any \(\varepsilon>0\) for large values of $T$, the following holds,
\begin{equation}
L_T({\boldsymbol{\nu}},\widetilde{\boldsymbol{\nu}})
\;=\;
\sum_{s=1}^{N_1(T)}
    \log \biggl(\frac{d{\nu_1}}{d{\widetilde\nu_1}}\bigl(Y_1(s)\bigr)\biggr)
\;\ge\;
-\,(1+\varepsilon) \frac{\mathrm{KL}(\widetilde\nu_1,\nu_1)}{\mathrm{KL}_{\inf}(\widetilde\nu_1,\mu_2)}\log T,
\end{equation}
and hence, for 
$$c_T = - (1+\varepsilon) \frac{\mathrm{KL}(\widetilde\nu_1,\nu_1)}{\mathrm{KL}_{\inf}(\widetilde\nu_1,\mu_2)}\log T, $$ 
$\mathbb{P}_{\widetilde{\boldsymbol{\nu}}}^{\pi}(L_T < c_T)\to 0$. Further, as the second arm is optimal in  $\widetilde{\boldsymbol{\nu}}$, $\mathbb{P}_{\widetilde{\boldsymbol{\nu}}}^{\pi}(E)\to 1$.

Finally, we also have
\begin{align*}
\mathbb{P}_{\widetilde{\boldsymbol{\nu}}}^{\pi}\bigl(E\bigr)
&= \int_{E \cap \{L_T < c_T\}} e^{-L_T({\boldsymbol{\nu}},\widetilde{\boldsymbol{\nu}})} d\mathbb{P}^\pi_{\boldsymbol{\nu}} + \int_{E \cap \{L_T \geq c_T\}} e^{-L_T({\boldsymbol{\nu}},\widetilde{\boldsymbol{\nu}})} d\mathbb{P}^\pi_{\boldsymbol{\nu}} \\
&\leq \int_{\{L_T < c_T\}} e^{-L_T(\boldsymbol{\nu},\widetilde{\boldsymbol{\nu}})} d\mathbb{P}^\pi_{{\boldsymbol{\nu}}} + \int_E e^{-c_T} d\mathbb{P}^\pi_{{\boldsymbol{\nu}}} \\
&\leq \mathbb{P}^\pi_{\widetilde{\boldsymbol{\nu}}}(\{L_T < c_T\}) + e^{-c_T} \mathbb{P}^\pi_{{\boldsymbol{\nu}}}(E),
\end{align*}
which implies
\[\mathbb{P}^\pi_{\boldsymbol{\nu}}(E) \geq e^{c_T} \big(\mathbb{P}^\pi_{\widetilde{\boldsymbol{\nu}}}(E) - \mathbb{P}^\pi_{\widetilde{\boldsymbol{\nu}}}(L_T < c_T)\big). \numberthis\label{eq-12}\]


Taking logarithm on both sides in Equation~\eqref{eq-12},  letting $\varepsilon \downarrow 0$, and optimizing over the free variable $\widetilde{\nu}_1 \in \mathcal M$,  we have, 
\begin{align*}
    \liminf_{T\to\infty}
\frac{\log\mathbb{P}_{\boldsymbol{\nu}}^{\pi}(N_2(T)>(1-\gamma)T)}{\log T}
\;\ge\; -
\inf_{\substack{\widetilde{\nu}_1 \in \mathcal M: \\ \mathbb{E}[\widetilde{\nu}_1] \leq \mu_2} } \frac{\mathrm{KL}(\widetilde\nu_1,\nu_1)}{\mathrm{KL}_{\inf}(\widetilde\nu_1,\mu_2)},
\end{align*}
proving the result.
\end{proof}

\noindent \textbf{Multi-arm setting:} Now we extend the above result to the setting with more than two arms. 
Without loss of generality, suppose that the means in the environment \(\boldsymbol{\nu} = (\nu_{1}, \nu_{2}, \ldots, \nu_{K})\) satisfy \(\mu_{1} > \mu_{2} > \cdots > \mu_{K}\).
Consider a new environment $\widetilde{\boldsymbol{\nu}}
= (\widetilde{\nu}_{1}, \widetilde{\nu}_{2}, \ldots, \widetilde{\nu}_{i-1}, \nu_{i}, \ldots, \nu_{K}) \in \mathcal{M}^{K}$ with respective means $(\widetilde{\mu}_1, \dots, \widetilde{\mu}_{i-1}, \mu_i, \dots, \mu_K)$ satisfying \(\widetilde{\mu}_{j} < \mu_{i}\) for all \(j<i\). Under this environment, arm \(i\) becomes the optimal arm. Similar to the above, $Y_a(s)$ denotes the reward when arm $a$ is played for $s$ instances. Let the event $E = \{ N_{i}(T) > (1-\gamma)T \}$ for some \(\gamma \in (0,1)\). Thus, by a change of measure from \(\boldsymbol{\nu}\) to \(\widetilde{\boldsymbol{\nu}}\), we have
\begin{equation*}
\mathbb{P}_{{\boldsymbol{\nu}}}^{\pi}\bigl(E\bigr)
= \underset{\mathbb{P}_{\widetilde{\boldsymbol{\nu}}}^{\pi}}{\mathbb{E}} \lrp{\mathbb{I}(E)\frac{d\mathbb{P}_{{\boldsymbol{\nu}}}^{\pi}}{d\mathbb{P}_{\widetilde{\boldsymbol{\nu}}}^{\pi}}}
     = \int_{E} e^{L_T({\boldsymbol{\nu}},\widetilde{\boldsymbol{\nu}})} d\mathbb{P}^{\pi}_{\widetilde{\boldsymbol{\nu}}},
\end{equation*}

where the log-likelihood‐ratio process $L_T(\boldsymbol{\nu},\widetilde{\boldsymbol{\nu}})$ is as follows:
\[
L_T(\boldsymbol{\nu}, \widetilde{\boldsymbol{\nu}})
\;:=\;
\log \bigg(\frac{d\mathbb{P}_{{\boldsymbol{\nu}}}^{\pi}}{d\mathbb{P}_{\widetilde{\boldsymbol{\nu}}}^{\pi}}\bigg)
\;=\;
\log \bigg( \prod_{j=1}^{i-1} \prod_{s=1}^{N_j(T)}
  \frac{d{\nu_j}}{d{\widetilde{\nu}_j}} \bigl(Y_j(s)\bigr)\bigg) = \sum_{j=1}^{i-1} L_j(\boldsymbol{\nu},\widetilde{\boldsymbol{\nu}}),
\]
where, for each arm $j \in \{1,2,\ldots,i-1\}$,  $L_j(\boldsymbol{\nu},\widetilde{\boldsymbol{\nu}})$ is as follows:
\[
L_j(\boldsymbol{\nu},\widetilde{\boldsymbol{\nu}})=\sum_{s=1}^{N_j(T)}
    \log\biggl(\frac{d{\nu_j}}{d{\widetilde{\nu}_j}}\bigl(Y_j(s)\bigr)\biggr).
\]
Proceeding along the same lines as in the two-armed bandit analysis, the preceding arguments can be extended to the general multi-armed setting, leading to the following for any suboptimal arm $i$: 
\begin{align}
    \liminf_{T\to\infty}
    \frac{\log\mathbb{P}_{\nu}^{\pi}(N_i(T) > (1-\gamma) T)}{\log T}
    \ge
    - \sum^{i-1}_{j=1} \inf_{\substack{\widetilde{\nu}_j \in \mathcal{M}: \\ \mathbb{E}[\widetilde{\nu}_j] \leq \mu_i }}\frac{\mathrm{KL}(\widetilde\nu_j,\nu_j)}{\mathrm{KL}_{\inf}(\widetilde\nu_j,\mu_i)} \label{eq:37}.
\end{align}
The desired result (extension of the above bound to deviations in \(\mathcal{D}_\gamma(T)=\bigl[\log^{1+\gamma}(T),\, (1-\gamma)T\bigr]\)) now follows from using Lemma~\ref{lem:ext1} (taken from \citet{FanGlynn2024Fragility}) in the above bound.  $\Box$


\begin{lemma}\label{lem:ext1}
Let $\boldsymbol{\nu}$ be any bandit environment with $B_{\gamma}(T) = [g(T), (1 - \gamma)T]$
and any strictly increasing function $g:(1, \infty) \to (0, \infty)$ such that  
$\underset{t \to \infty}{\lim} \tfrac{g(t)}{\log t} = \infty$
and $g(t) = o(t)$, and let $i$ be a sub-optimal arm in $\boldsymbol{\nu}$.  If the following condition holds  
\begin{equation*} \label{eq:50}
    \liminf_{T \to \infty} \frac{\log \mathbb{P}_{\boldsymbol{\nu}}^{\pi}\big(N_i(T) > (1-\gamma)T\big)}{\log T} \geq -c_i(\boldsymbol{\nu}),
\end{equation*}
then 
\begin{equation*} \label{eq:51}
    \liminf_{T \to \infty} \inf_{x \in B_{\gamma}(T)} \frac{\log \mathbb{P}_{\boldsymbol{\nu}}^{\pi}\big(N_i(T) > x\big)}{\log x} \geq -c_i(\boldsymbol{\nu}).
\end{equation*}  
\end{lemma}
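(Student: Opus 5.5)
The plan is a monotonicity (time-rescaling) argument: each deviation level $x\in B_\gamma(T)$ is reduced to the ``linear'' deviation event at a shorter horizon $T'\le T$. The only structural property of $\pi$ I need is that it is \emph{anytime}: its behaviour up to time $T'$ does not depend on the final horizon $T$. This holds for the Generalized $\KLinf$-UCB algorithm, since $f_a$ depends only on $t$ and $N_a(t)$. Under this property the run up to $T$ extends the run up to $T'$, and $N_i(\cdot)$ is nondecreasing along a single sample path. Hence for $T'\le T$,
\[
\mathbb{P}_{\boldsymbol{\nu}}^{\pi}\bigl(N_i(T)>x\bigr)\;\ge\;\mathbb{P}_{\boldsymbol{\nu}}^{\pi}\bigl(N_i(T')>x\bigr).
\]

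\textbf{Step 1 (choice of $T'$).} For $x\in B_\gamma(T)$ I set $T'=\min\{T,\lceil x/(1-\gamma)\rceil\}$.
\begin{itemize}
\item If $T'=\lceil x/(1-\gamma)\rceil$, then $(1-\gamma)T'\ge x$, so $\{N_i(T')>(1-\gamma)T'\}\subseteq\{N_i(T')>x\}$.
\item If $T'=T$, then $x\le(1-\gamma)T$ by the definition of $B_\gamma(T)$, and the same inclusion holds directly.
\end{itemize}
In both cases
\[
\mathbb{P}_{\boldsymbol{\nu}}^{\pi}\bigl(N_i(T)>x\bigr)\;\ge\;\mathbb{P}_{\boldsymbol{\nu}}^{\pi}\bigl(N_i(T')>(1-\gamma)T'\bigr).
\]
Moreover $T'\ge \min\{T,\,g(T)/(1-\gamma)\}$, which tends to $\infty$ uniformly over $x\in B_\gamma(T)$ because $g(T)\to\infty$.

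\textbf{Step 2 (apply the hypothesis uniformly).} Fix $\varepsilon>0$. The assumed $\liminf$ gives a $T_0(\varepsilon)$ such that
\[
\log\mathbb{P}_{\boldsymbol{\nu}}^{\pi}\bigl(N_i(s)>(1-\gamma)s\bigr)\ge-(c_i(\boldsymbol{\nu})+\varepsilon)\log s \quad\text{for all } s\ge T_0 .
\]
For $T$ large we have $T'\ge T_0$ for every $x\in B_\gamma(T)$. Also $T'\le x/(1-\gamma)+1$, so $\log T'\le\log x+\log\frac{1}{1-\gamma}+o(1)$. Dividing by $\log x>0$ gives
\[
\frac{\log\mathbb{P}_{\boldsymbol{\nu}}^{\pi}(N_i(T)>x)}{\log x}\;\ge\;-(c_i(\boldsymbol{\nu})+\varepsilon)\Bigl(1+\frac{\log\frac{1}{1-\gamma}+o(1)}{\log x}\Bigr).
\]
The correction term is at most of order $1/\log g(T)\to0$, uniformly in $x$. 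Taking $\inf_x$, then $\liminf_T$, then $\varepsilon\downarrow0$ yields the claim. (If $c_i(\boldsymbol{\nu})=\infty$ there is nothing to prove.)

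\textbf{Main obstacle.} The delicate point is the monotonicity coupling in Step 1. It requires the anytime property, since for a horizon-dependent policy the runs at different horizons are not prefixes of one another. I would state this requirement explicitly and verify it for $\KLinf$-UCB. The remaining care is only uniformity in $x$: both the threshold $T_0$ and the $\log(1-\gamma)$ correction must be controlled over the whole range $B_\gamma(T)$, and this is exactly what $g(T)\to\infty$ provides. The rounding issue near $x\approx(1-\gamma)T$ is absorbed by the $\min\{T,\cdot\}$ choice of $T'$.
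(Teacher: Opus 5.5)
Your proof is correct. The paper gives no argument for this lemma; it only states it and attributes it to Fan and Glynn. Your time-rescaling argument therefore supplies the missing justification in the natural way. Its ingredients are the pathwise monotonicity of $N_i(\cdot)$, the reduction of level $x$ to the linear event at horizon $T'=\min\{T,\lceil x/(1-\gamma)\rceil\}$, and the uniform estimate $\log T'\le\log x+\log\frac{1}{1-\gamma}+o(1)$, which holds because $x\ge g(T)\to\infty$.

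Your route also makes explicit two points that the statement leaves hidden.

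First, the anytime property is genuinely necessary, not an artifact of your proof. Consider a horizon-dependent policy that, at horizon $T$, plays arm $i$ in every round with probability $T^{-c}$ ($c>0$) and otherwise never plays it. This policy satisfies the hypothesis, yet for $g(T)=\log^{1+\gamma}T$,
$$\log\mathbb{P}(N_i(T)>g(T))/\log g(T)=-c\log T/\log g(T)\to-\infty.$$
The lemma is invoked in the proof of Theorem~\ref{theorem-2} for an arbitrary $\mathcal{M}^K$-optimal $\pi$, not only for $\KLinf$-UCB. You should therefore justify the anytime property from the paper's model of a policy: $A_t$ is a function of past observations and internal randomization only, with no dependence on $T$.

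Second, your argument uses only $g(T)\to\infty$, together with $c_i(\boldsymbol{\nu})\ge 0$. The sign condition is needed when you replace $\log T'$ by the larger quantity $\log x+\log\frac{1}{1-\gamma}+o(1)$. It is automatically forced by the hypothesis, since the ratio there is nonpositive. Strict monotonicity of $g$, $g(t)/\log t\to\infty$, and $g(t)=o(t)$ play no role in this lower-tail direction.
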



\subsection{Proof of Theorem~\ref{theorem-3}}\label{appendix:proof_theo3}
\noindent \textbf{Two-arm setting:} To establish this theorem, we first prove the following lemma in a simplified two-armed bandit setting where arm \(1\) is optimal, i.e., \(\mu_1 > \mu_2\). 
As introduced earlier, let \(\mathcal{L}\) denote the class of reward distributions satisfying Assumptions~\ref{prop-1} and~\ref{prop-2}.

\begin{lemma}\label{lem-A.16}
    Let $\pi$ be the $\mathcal{L}^2$-optimal Generalized $\mathrm{KL}_{\inf}$-UCB algorithm. For $\gamma \in (0,1)$, let $x_T = \log^{1+\gamma}(T)$. Then, for ${\boldsymbol{\nu}} \in \mathcal{L}^K$, the suboptimal arm $2$ satisfies,
    \begin{equation}\label{eq:tail-upper-bound}
    \limsup_{T \to \infty}
    \frac{\log \mathbb{P}_{{\boldsymbol{\nu}}}^\pi\!\left(N_{2}(T) > x_T\right)}{\log x_T}
    \;\le\; - 1.
    \end{equation}
\end{lemma}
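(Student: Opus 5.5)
We want $\mathbb{P}(N_2(T) > x_T) \le x_T^{-1+o(1)}$ for $x_T=\log^{1+\gamma}T$. The plan is the standard decomposition: arm 2 can be pulled more than $x_T$ times only if, after it has already been pulled about $x_T/2$ times, either its own index still exceeds some threshold, or the optimal arm's index drops below that threshold at least once. Fix a level $\mu_1-\eta$ with $\eta>0$ small, so that $\mu_2<\mu_1-\eta$. Let $\tau$ be the round at which arm 2 is pulled for the $\lceil x_T/2\rceil$-th time. On $\{N_2(T)>x_T\}$ arm 2 is pulled at some round $t\in[\tau,T]$ with $N_2(t)\ge x_T/2$, and at that round $U_2(N_2(t),t)\ge U_1(N_1(t),t)$. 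This gives the inclusion
\[
\{N_2(T)>x_T\}\subseteq A_T\cup B_T,
\]
where
\[
A_T=\{\exists\, t\le T: N_2(t)\ge x_T/2,\ A_{t}=2,\ U_2(N_2(t),t)\ge \mu_1-\eta\},
\]
\[
B_T=\{\exists\, t\in[x_T/2,T]: U_1(N_1(t),t)<\mu_1-\eta\}.
\]

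\textbf{Bounding $A_T$.} For $n\ge x_T/2$ we have $f_2(t)/n\le(\log T+2\log\log T+g(T))/n=O(\log T)/\log^{1+\gamma}T\to0$. So $U_2\ge\mu_1-\eta$ requires $\KLinf(\hat\nu_{2,n},\mu_1-\eta)\le o(1)$, while $\KLinf(\nu_2,\mu_1-\eta)>0$ is a fixed constant. Applying Assumption~\ref{prop-2} with $\delta=\mu_1-\eta-\mu_2$ and a fixed $d$ below $\min(d_0,\KLinf(\nu_2,\mu_1-\eta)/2)$, a union bound over $n\ge x_T/2$ gives
\[
\mathbb{P}(A_T)\le\sum_{n\ge x_T/2}e^{-nc d^2}=O\!\left(e^{-c'x_T}\right),
\]
which is super-polynomially small in $x_T$.

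\textbf{Bounding $B_T$ (the main obstacle).} Here we need $\mathbb{P}(B_T)\lesssim x_T^{-1}$, and this is the step that produces the exponent $-1$. The event $U_1<\mu_1-\eta$ implies $U_1<\mu_1$, i.e.
\[
N_1(t)\,\KLinf(\hat\nu_1(t),\mu_1)>f_1(t)\ge\log t+2\log\log t+g(N_1(t)).
\]
Two observations reduce this to Assumption~\ref{prop-1}. First, $t\mapsto\log t$ is increasing, so for $t\ge x_T/2$ this event implies
\[
n\,\KLinf(\hat\nu_{1,n},\mu_1)-g(n)\ge\log(x_T/2)\quad\text{for some } n.
\]
Second, Assumption~\ref{prop-1} is time-uniform over all $n$, so
\[
\mathbb{P}(B_T)\le\mathbb{P}\!\left(\exists n: n\KLinf(\hat\nu_{1,n},\mu_1)-g(n)\ge\log(x_T/2)+2\log\log(x_T/2)\right)\le\frac{2}{x_T\log^2(x_T/2)}.
\]
The delicate points are two. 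First, the time index at which underestimation occurs must be lower bounded by order $x_T$, since only the factor $e^{-\log t}$ gives the rate; this holds because $t\ge\tau\ge x_T/2$. Second, the union over all later times must be absorbed, which is exactly what the uniform-in-$n$ form of Assumption~\ref{prop-1} provides. If the reduction from $U_1<\mu_1-\eta$ is awkward, one can instead use $U_1<\mu_1$ directly, at the cost of a slightly different split of the sub-events with threshold $\mu_1$ in $A_T$ (then using $\delta=\mu_1-\mu_2$).

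\textbf{Conclusion.} Combining the two bounds,
\[
\mathbb{P}(N_2(T)>x_T)\le\frac{2}{x_T\log^2(x_T/2)}+O\!\left(e^{-c'x_T}\right)\le\frac{C}{x_T}.
\]
Since $\log x_T\to\infty$, dividing $\log\mathbb{P}$ by $\log x_T$ and taking $\limsup$ gives the bound $-1$. The same argument works with $x_T$ replaced by any $x\in\mathcal{D}_\gamma(T)$; the lemma only needs it at $x_T$.
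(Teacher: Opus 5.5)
Your proof is correct and follows essentially the same route as the paper: split at a level strictly between $\mu_2$ and $\mu_1$, control the optimal arm's underestimation at rounds $t \gtrsim x_T$ through the time-uniform Assumption~\ref{prop-1} (this is what produces the $x_T^{-1}$ rate), and control arm~2's overestimation after order-$x_T$ samples through Assumption~\ref{prop-2}. The differences are only cosmetic: you union-bound over $n \ge x_T/2$ instead of fixing $N_2 = x_T$ and using monotonicity of $U_2$ in $t$, you skip the paper's unneeded split of the underestimation event at $x_T^{C_\nu}$, and your fixed $d < d_0$ respects the hypothesis of Assumption~\ref{prop-2} more carefully than the paper's $T$-dependent choice of $d$.
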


\begin{proof}[Proof of Lemma~\ref{lem-A.16}]\label{proof:lem-A.16} Consider a two-armed multi-armed bandit problem with environment 
\(\boldsymbol{\nu} = (\nu_1, \nu_2)\). Without loss of generality, assume \(\mu_1 > \mu_2\). For $m\ge 1$, let \(\tau_2(m)\) denote the time at which arm \(2\) is pulled for the \(m^{\text{th}}\) time. Fix  \(\delta \in (0,\, \mu_1 - \mu_2)\), and let \(C_{\nu} \geq 1\) be a constant satisfying \(x_T^{C_{\nu}} < T\). Then, 
\begin{align*}
\mathbb{P}_{\boldsymbol{\nu}}^{\pi}\left( N_2(T) > x_T \right) 
&\leq \mathbb{P}_{\boldsymbol{\nu}}^{\pi}( \exists\, t \in (\tau_2(x_T), T] \text{ s.t. } \{U_{1}(N_1(t-1), t-1) \leq U_{2}(N_2(t-1), t-1)\} \\
&\qquad \cap \{N_2(t-1) = x_T\})\\
&\leq \mathbb{P}_{\boldsymbol{\nu}}^{\pi}\left( \exists\, t \in (x_T, T] \text{ s.t. } U_{1}(N_1(t-1), t-1) \leq U_{2}(x_T, T) \right) \\
&\leq \mathbb{P}_{\boldsymbol{\nu}}^{\pi}\left( \exists\, t \in (x_T, T] \text{ s.t. } U_{1}(N_1(t-1), t-1) \leq \mu_2 + \delta \right) \numberthis\label{eq-43}\\
& \qquad+ \mathbb{P}_{\boldsymbol{\nu}}^{\pi}\left( U_{2}(x_T, T) > \mu_2 +\delta\right) \numberthis\label{eq-44} \\
&\leq \mathbb{P}_{\boldsymbol{\nu}}^{\pi}\left( \exists\, t \in (x_T, T] \text{ s.t. } U_{1}(N_1(t-1), t-1) \leq \mu_1 \right)  \\
&\qquad + \mathbb{P}_{\boldsymbol{\nu}}^{\pi}\left( U_{2}(x_T, T) > \mu_2 +\delta\right) \\
&= \underbrace{\mathbb{P}_{\boldsymbol{\nu}}^{\pi}\left( \exists\, t \in (x_T, x^{C_\nu}_T] \text{ s.t. } U_{1}(N_1(t-1), t-1) \leq \mu_1 \right)}_\text{A}\numberthis\label{eq-47} \\
& \quad + \underbrace{\mathbb{P}_{\boldsymbol{\nu}}^{\pi}\left( \exists\, t \in (x^{C_\nu}_T, T] \text{ s.t. } U_{1}(N_1(t-1), t-1) \leq \mu_1 \right)}_\text{B} \numberthis\label{eq:term-B}\\
&\qquad + \underbrace{\mathbb{P}_{\boldsymbol{\nu}}^{\pi}\left( U_{2}(x_T, T) > \mu_2 +\delta\right)}_\text{C}.
\end{align*}


\noindent  \textbf{Controlling A and B:} We upper bound A using Assumption~\ref{prop-1} with \(f_a(t) = \log(t) + 2\log\log(t) + g(N_a(t))\), \(g(N_a(t))=C_1\log(1+N_a(t)) + C_2\), and $C_1,C_2 >0$, as discussed next. 
\begin{align*}
\text{A} 
&= \mathbb{P}_{\boldsymbol{\nu}}^{\pi}\left( \exists\, t \in (x_T, x_T^{C_\nu}] \text{ s.t. } \mathrm{KL_{inf}}(\hat{\nu}_1(t-1),\mu_1) \geq \frac{f_1(t-1)}{N_1(t-1)}  \right) \\
&= \mathbb{P}_{\boldsymbol{\nu}}^{\pi} \{\exists\, t \in (x_T, x_T^{C_\nu}] \text{ s.t. } N_1(t-1)\mathrm{KL_{inf}}(\hat{\nu}_1(t-1),\mu_1) - g(N_1(t-1)) \notag\\
& \hspace{8cm} \geq \log(t-1)+\log(\log(t-1))\} \\
&\leq \mathbb{P}_{\boldsymbol{\nu}}^{\pi}\{\exists\, t \in (x_T, x_T^{C_\nu}] \text{ s.t. } N_1(t-1)\mathrm{KL_{inf}}(\hat{\nu}_1(t-1),\mu_1) - g(N_1(t-1))  \notag \\
& \hspace{8cm} \geq \log(x_T)+\log(\log(x_T))\} \\
&\leq \mathbb{P}_{\boldsymbol{\nu}}^{\pi}\{\exists\, t \in \mathbb{N} \text{ s.t. } N_1(t-1)\mathrm{KL_{inf}}(\hat{\nu}_1(t-1),\mu_1) - g(N_1(t-1)) \notag \\
& \hspace{8cm} \geq \log(x_T)+\log(\log(x_T))\} \\
&\leq \exp(-\log x_T-\log\log x_T)  \tag{Assumption~\ref{prop-1}}\\
&= \frac{1}{x_T\log x_T}.
\end{align*}
Term B can be upper-bounded in an analogous manner as below:
\[
\text{B} \leq {1/(x_T^{C_\nu}\log x_T^{C_\nu}}).
\]
Finally, summing the contributions from A and B, and taking the logarithm, we get
    \[\log (\text{A}+\text{B}) \leq - C_\nu \log x_T - \log \log x_T^{C_\nu} + \log (1+x_T^{C_\nu-1}C_\nu),\]
    which gives
    \[\frac{\log (\text{A}+\text{B})}{\log x_T } \leq -C_\nu - \frac{\log (C_\nu  \log x_T)}{\log x_T} + \frac{\log(1+x_T^{C_\nu-1}C_\nu)}{\log x_T},\]
    which further implies
    \[\limsup_{T \to \infty} \frac{\log (\text{A}+\text{B})}{\log x_T } \leq -C_\nu + (C_\nu-1) = -1.\]

\noindent \textbf{Controlling C:}
To upper bound C, we use 
Assumption~\ref{prop-2} with $d = \mathrm{KL}_{\inf}(\nu_{2},\, \mu_{2}+\delta) - \tfrac{f_2(T)}{x_{T}}$. Let $\hat{\nu}_{a,n}$ denote the empirical distribution of arm $a$ with $n$ samples. Note that $\tfrac{f_2(T)}{x_T}$ is decreasing in $T$. Thus, for all sufficiently large \(T\), we have  for constant \(c_\nu>0\) from Assumption~\ref{prop-2},
\begin{align*}
\text{C} &= \mathbb{P}_{\nu}^{\pi}\left(  U_{2}(x_T, T) > \mu_2 + \delta \right) \\
&= \mathbb{P}_{\boldsymbol{\nu}}^{\pi}\left( \mathrm{KL}_{\inf}(\hat{\nu}_{2,x_T}, \mu_2+\delta ) \leq \frac{f_2(T)}{x_T} \right) \\
&= \mathbb{P}_{\boldsymbol{\nu}}^{\pi}\left( \mathrm{KL}_{\inf}(\hat{\nu}_{2,x_T}, \mu_2+\delta) \leq \mathrm{KL}_{\inf}(\nu_2, \mu_2+\delta) - \left(\mathrm{KL}_{\inf}(\nu_2, \mu_2+\delta) - \frac{f_2(T)}{x_T} \right) \right) \\
&\leq \exp \left[- x_T c_\nu \left(\mathrm{KL}_{\inf}(\nu_2,\mu_2+\delta) - \frac{f_2(T)}{x_T}\right)^2\right].
\end{align*}


It can be easily shown that as $T \to \infty$,  $\frac{C}{A+B} \to 0$. Finally, summing all the terms, we get
\begin{align*}
\limsup_{T \to \infty}
    \frac{\log \mathbb{P}_{{\boldsymbol{\nu}}}^\pi\left(N_{2}(T) > x_T\right)}{\log x_T} &\le\limsup_{T\to\infty} \frac{\log(\text{A}+\text{B}+\text{C})}{\log x_T}  \\
    &= \limsup_{T\to\infty} \frac{\log (\text{A}+\text{B})}{\log x_T} + \limsup_{T\to\infty} \frac{\log\!\left(1+\frac{\text{C}}{\text{A+B}}\right)}{\log x_T} \\
    &\leq \limsup_{T\to\infty} \frac{\log (\text{A}+\text{B})}{\log x_T} + \limsup_{T\to\infty} \frac{\frac{\text{C}}{\text{A+B}}}{\log x_T} \\
    &\leq -1,
\end{align*}
proving the bound. 
\end{proof}

\noindent \textbf{Multi-arm setting:} We now extend the above argument to the setting with more than two arms. Without loss of generality, assume that $\mu_1 > \mu_2 > \dots > \mu_K$ in the environment \(\boldsymbol{\nu} = (\nu_1, \nu_2, \dots, \nu_K) \in \mathcal{L}^K\). Choose \(\delta \in (0, \min_{i\ge 3} \mu_{i-1} - \mu_i)\). Then, analogous to Equations~\eqref{eq-43} and~\eqref{eq-44}, we obtain the following decomposition for any suboptimal arm $i \ge 3$:
\begin{align*}
&\mathbb{P}_{\boldsymbol{\nu}}^{\pi}\left( N_i(T) > x_T \right) \\
&\leq \mathbb{P}_{\boldsymbol{\nu}}^{\pi}\left( \exists\, t \in (x_T, T] \text{ s.t. } \max_{1 \leq j \leq i-1}U_{j}(N_j(t-1), t-1) \leq \mu_i +\delta \right) \\
& \qquad + \mathbb{P}_{\boldsymbol{\nu}}^{\pi}\left(U_i(x_T,T) > \mu_i+\delta \right) \\
&\leq \mathbb{P}_{\boldsymbol{\nu}}^{\pi}\left(\forall 1 \leq j \leq i-1, \exists\, t \in (x_T, T] \text{ s.t. } U_{j}(N_j(t-1), t-1) \leq \mu_i +\delta \right) \\
& \qquad + \mathbb{P}_{\boldsymbol{\nu}}^{\pi}\left(U_i(x_T,T) > \mu_i+\delta \right) \\
&\leq \mathbb{P}_{\boldsymbol{\nu}}^{\pi}\left(\forall 1 \leq j \leq i-1, \exists\, t \in (x_T, x_T^{C_\nu}] \text{ s.t. } U_{j}(N_j(t-1), t-1) \leq \mu_j \right) \numberthis \label{eq:EA} \\
&\qquad + \mathbb{P}_{\boldsymbol{\nu}}^{\pi}\left(\forall 1 \leq j \leq i-1, \exists\, t \in (x_T^{C_\nu},T] \text{ s.t. } U_{j}(N_j(t-1), t-1) \leq \mu_j \right) \numberthis \label{eq:EB}\\
& \qquad + \mathbb{P}_{\boldsymbol{\nu}}^{\pi}\left(U_i(x_T,T) > \mu_i+\delta \right) \\
&= \mathbb{P}_{\boldsymbol{\nu}}^{\pi}\left(\bigcap^{i-1}_{j=1} E^\text{A}_j\right)  + \mathbb{P}_{\boldsymbol{\nu}}^{\pi}\left(\bigcap^{i-1}_{j=1} E^\text{B}_j \right) + \mathbb{P}_{\boldsymbol{\nu}}^{\pi}\left(U_i(x_T,T) > \mu_i+\delta \right),
\end{align*}
where 
\begin{align*}
& E^\text{A}_j = \{\exists t \in (x_T,x_T^{C_\nu}] \text{ s.t. } U_{j}(N_j(t-1), t-1) \leq \mu_j\},\\
& E^\text{B}_j = \{\exists  t \in (x_T^{C_\nu},T] \text{ s.t. } U_{j}(N_j(t-1), t-1) \leq \mu_j\}.
\end{align*}

Note that $E^\text{A}_j$'s and $E^\text{B}_j$'s are not independent for $j \in [i-1]$. However, we now show that each $E^A_j \subseteq F^A_j$ and $E^B_j \subseteq F^B_j $, where $F^A_j$ and $F^B_j$ are defined below. Recall, $\hat{\nu}_{a,n}$ denote the empirical distribution of arm $a$ with $n$ samples. 

\begin{align*}
    E^A_j &= \left\{ \exists t \in (x_T, x^{C_\nu}_T] \text{ s.t. } U_j(N_j(t-1), t-1) \le \mu_j \right\}\\
    &\subseteq \left\{ \exists t \in (x_T, x^{C_\nu}_T] \text{ s.t. } N_j(t-1) \KLinf(\hat{\nu}_{j}(t-1), \mu_j) \ge f_j(t-1) \right\}\\
    &=\bigg\{ \exists t \in (x_T, x^{C_\nu}_T] \text{ s.t. } N_j(t-1) \KLinf(\hat{\nu}_{j}(t-1), \mu_j) - g(N_j(t-1)) \geq \\
    &\qquad \hspace{6cm} \log(t-1) + \log\log(t-1)\bigg\}\\
    &\subseteq \left\{  \exists n \in \mathbb{N} \text{ s.t. } n \KLinf(\hat{\nu}_{j,n}, \mu_j) - g(n)  \ge  \log(x_T) + \log\log(x_T) \right\} := F^\text{A}_j
\end{align*}

Finally,
\begin{align*}
    \mathbb{P}_{\boldsymbol{\nu}}^{\pi}\left(\bigcap^{i-1}_{j=1}E^\text{A}_j\right) &\leq \prod^{i-1}_{j=1} \mathbb{P}_{\boldsymbol{\nu}}^{\pi}\bigg( \exists\, n \in \mathbb{N} \text{ s.t. } n \KLinf(\hat{\nu}_{j,n}, \mu_j) -g(n) \geq \log x_T + \log \log x_T  \bigg) \numberthis \label{eq:ind}
\end{align*}
\eqref{eq:ind} holds due to the independence of events $F^\text{A}_j$ for each $j \in [i-1]$. Similarly, we can define $F^{\text{B}}_j$ as follows,
\[
F^{\text{B}}_j = \left\{  \exists n \in \mathbb{N} \text{ s.t. } n \KLinf(\hat{\nu}_{j,n}, \mu_j) - g(n)  \ge  \log(x_T^{C_\nu}) + \log\log(x_T^{C_\nu}) \right\}
\]
Following the similar steps as above and using the independence of events $F^\text{B}_j$, we have,
\begin{align*}
    \mathbb{P}_{\boldsymbol{\nu}}^{\pi}\left(\bigcap^{i-1}_{j=1}E^\text{B}_j\right) &\leq \prod^{i-1}_{j=1} \mathbb{P}_{\boldsymbol{\nu}}^{\pi}\bigg( \exists\, n \in \mathbb{N} \text{ s.t. } n \KLinf(\hat{\nu}_{j,n}, \mu_j) -g(n) \geq \log (x_T^{C_\nu}) + \log \log (x_T^{C_\nu})  \bigg) \numberthis \label{eq:ind1}
\end{align*}
Notice that these terms in \eqref{eq:ind} and \eqref{eq:ind1} can further be bounded using similar bounds as in the terms $\text{A}$ and $\text{B}$ respectively for the two-arm setting above. Finally, taking log, the product terms in \eqref{eq:ind} and \eqref{eq:ind1} become a summation, giving the following bound: 
\begin{equation}\label{eq-75}
    \limsup_{T \to \infty}
    \frac{\log \mathbb{P}_{{\boldsymbol{\nu}}}^\pi\!\left(N_{i}(T) > x_T\right)}{\log x_T}
    \;\le\; - (i-1).
    \end{equation}
The desired result (extension of the above bound to deviations in \(\mathcal{D}_\gamma(T)=\bigl[\log^{1+\gamma}(T),\, (1-\gamma)T\bigr]\)) now follows from using Lemma~\ref{lem:A-18} (taken from \citet{FanGlynn2024Fragility}) in the above bound.  $\Box$ 


\begin{lemma}\label{lem:A-18}
Let $\boldsymbol{\nu}$ be any bandit environment with $B_{\gamma}(T) = [g(T), (1 - \gamma)T]$
and any strictly increasing function $g : (1, \infty) \to (0, \infty)$ such that  
$\underset{t \to \infty}{\lim} \frac{g(t)}{\log t} = \infty$
and $g(t) = o(t)$, and let $i$ be a sub-optimal arm in $\boldsymbol{\nu}$.  
If the following condition holds  
\begin{equation} 
    \liminf_{T \to \infty} \frac{\log P_{\boldsymbol{\nu}}^{\pi}\big(N_i(T) > g(T)\big)}{\log g(T)} \leq -c_i(\boldsymbol{\nu}),
\end{equation}
then
\begin{equation} 
    \liminf_{T \to \infty} \inf_{x \in B_{\gamma}(T)} \frac{\log P_{\boldsymbol{\nu}}^{\pi}\big(N_i(T) > x\big)}{\log x} \leq -c_i(\boldsymbol{\nu}).
\end{equation}  
\end{lemma}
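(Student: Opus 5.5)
The plan is to compare the infimum over the whole window $B_{\gamma}(T)$ with its value at the left endpoint $x=g(T)$. The hypothesis is stated exactly at that point, and both the conclusion and the hypothesis are $\liminf$'s with an upper bound. So no monotonicity or interpolation argument in $x$ should be needed, unlike the companion Lemma~\ref{lem:ext1} for lower bounds.

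\emph{Step 1 (the endpoint is admissible).} I first check that $g(T)\in B_{\gamma}(T)$ for all sufficiently large $T$. Since $g(T)=o(T)$ and $1-\gamma>0$, there is a $T_0$ with $g(T)\le (1-\gamma)T$ for $T\ge T_0$. Hence the interval $[g(T),(1-\gamma)T]$ is nonempty and contains $g(T)$. In addition, $g(t)/\log t\to\infty$ gives $g(T)>1$ for large $T$. On the whole window $\log x>0$, so every ratio $\log \mathbb{P}^{\pi}_{\boldsymbol{\nu}}(N_i(T)>x)/\log x$ is well defined in $[-\infty,0]$. The value $-\infty$ occurs when the probability vanishes and is harmless for an upper bound.

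\emph{Step 2 (pointwise comparison).} For every $T\ge T_0$, since $g(T)$ is one of the points over which the infimum is taken,
\begin{equation*}
\inf_{x\in B_{\gamma}(T)}\frac{\log \mathbb{P}^{\pi}_{\boldsymbol{\nu}}(N_i(T)>x)}{\log x}\;\le\;\frac{\log \mathbb{P}^{\pi}_{\boldsymbol{\nu}}(N_i(T)>g(T))}{\log g(T)}.
\end{equation*}

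\emph{Step 3 (pass to the limit).} Taking $\liminf_{T\to\infty}$ of both sides preserves the inequality. The right-hand side has $\liminf$ at most $-c_i(\boldsymbol{\nu})$ by hypothesis, which gives the claim.

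I do not expect a real obstacle; the only points needing care are those in Step 1. One is that the window is nonempty and contains $g(T)$ eventually, which uses $g(t)=o(t)$. The other is that $\log x>0$ there, so dividing by $\log x$ does not flip signs; this uses $g(t)/\log t\to\infty$. The strict monotonicity of $g$ is not needed for this direction.
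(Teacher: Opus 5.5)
Your proof is correct. Since $g(T)=o(T)$, the left endpoint $g(T)$ lies in $B_\gamma(T)$ for all large $T$. The infimum is therefore at most the ratio evaluated at $g(T)$, and taking $\liminf$ gives the claim.

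The paper does not actually prove this lemma; it imports it from Fan and Glynn as the upper-bound counterpart of Lemma~\ref{lem:ext1}. Your direct argument shows that, in the form stated here, the lemma is elementary. It needs neither the strict monotonicity of $g$ nor any monotonicity of $N_i(T)$ in $T$ or $x$. Lemma~\ref{lem:ext1} is different: there, bounding an infimum \emph{from below} over the whole window requires transferring the endpoint estimate to every $x$ in the window.

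Your route also gives something the cited statement does not. Your Step~2 inequality holds for \emph{every} sufficiently large $T$, so it yields equally
\[ \limsup_{T\to\infty}\inf_{x\in B_\gamma(T)}\frac{\log \mathbb{P}^{\pi}_{\boldsymbol{\nu}}(N_i(T)>x)}{\log x}\;\le\; \limsup_{T\to\infty}\frac{\log \mathbb{P}^{\pi}_{\boldsymbol{\nu}}(N_i(T)>g(T))}{\log g(T)}. \]
This is exactly the $\limsup$ form asserted in Theorem~\ref{theorem-3}. It is also what the proof there supplies, namely a $\limsup$ bound at $x_T=\log^{1+\gamma}T$. The lemma as stated, with $\liminf$ in the conclusion, would not by itself deliver Theorem~\ref{theorem-3}'s $\limsup$ conclusion, whereas your one-line comparison does.
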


\subsection{Proof of Theorem~\ref{theorem-4}}\label{appendix:proof_theo4}
First, recall that the finitely-supported distribution class \(\mathcal{L}_{[a,b],\mathcal{X}}\)  satisfies Assumption~\ref{prop-1} with $g(n) := \log(1+n) + 1$ (see, \citet{HondaBounded10} and \citet[Lemma F.1]{agrawal2021optimal}). Hence, the Generalized $\KLinf$-UCB algorithm is optimal for this class. In fact, it is known to be optimal with a smaller threshold of $f_a(t)=\log t + \log\log t$ for all $a\in [K]$. We refer the reader to \citet{cappe2013kullback} for a proof. We will use this smaller threshold in the rest of our proof below.

\noindent \textbf{Two-arm setting:} As before, we begin by establishing a simplified version of the upper bound in the two-armed bandit setting, which will subsequently be extended to the general multi-armed case. Without loss of generality, we assume for the remainder of this argument that \(\mu_1 > \mu_2\). 

\begin{lemma}\label{lem:sanov_lemma}
    Let $\pi$ be $\mathcal{L}^2_{[a,b],\mathcal{X}}$-optimal Generalized $\KLinf$-UCB algorithm with $f_a(t)=\log t + \log\log t$. For $\gamma \in (0,1)$, let $x_T = \log^{1+\gamma}(T)$. Then, for any ${\boldsymbol{\nu}} \in \mathcal{L}^2_{[a,b],\mathcal{X}}$, and the suboptimal arm $2$,
    \begin{align*}
    \limsup_{T\to\infty}
    \frac{\log \mathbb{P}_{{\boldsymbol{\nu}}}^\pi\!\left(N_{2}(T) > x_T\right)}{\log x_T}
    \le -
    \inf_{\substack{\widetilde{\nu}_1 \in \mathcal L_{[a,b],\mathcal{X}}:\\ \mathbb{E}[\widetilde{\nu}_1] \leq \mu_2 }} \frac{\mathrm{KL}(\widetilde\nu_1,\nu_1)}{\mathrm{KL}_{\inf}(\widetilde\nu_1,\mu_2)}
    \end{align*}
\end{lemma}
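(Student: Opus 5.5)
The plan is to keep the decomposition used for Lemma~\ref{lem-A.16} but bound the underestimation event for arm~1 sharply with Sanov's theorem, instead of the crude time-uniform bound of Assumption~\ref{prop-1}. Fix $\delta\in(0,\mu_1-\mu_2)$. On $\{N_2(T)>x_T\}$ there is a round $t\in(x_T,T]$ at which arm~2 is chosen with $N_2(t-1)\ge x_T$. Either $U_2$ exceeds $\mu_2+\delta$ at that round, or $U_1(N_1(t-1),t-1)\le\mu_2+\delta$. The first event is handled as term C in Lemma~\ref{lem-A.16}, using Assumption~\ref{prop-2}. Since $x_T=\log^{1+\gamma}T\gg f_2(T)$, it has probability $\exp(-\Omega(x_T))$, which is negligible on the $\log x_T$ scale.

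The main work is the event
\[
G=\{\exists t\le T:\ N_1(t-1)\,\KLinf(\hat\nu_1(t-1),\mu_2+\delta)\ge \log(t-1)+\log\log(t-1)\}.
\]
Because $U_1\le \mu_2+\delta$ is exactly this condition, $G$ contains the second event. I would split $G$ according to $n=N_1(t-1)$.

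First, the time at which the bad round occurs satisfies $t-1\ge N_2(t-1)\ge x_T$. So the event for a given $n$ requires $n\,\KLinf(\hat\nu_{1,n},\mu_2+\delta)\ge \log x_T$.

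Second, for large $n$ this forces $\hat\nu_{1,n}$ to have mean at most $\mu_2+\delta<\mu_1$, which has probability $e^{-cn}$. Summing over $n\ge C\log x_T$ with $C$ large gives $o(x_T^{-M})$ for any $M$.

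Third, for $n\le C\log x_T$, the empirical distribution of $n$ samples on the finite alphabet $\mathcal X$ lies in the set
\[
\Gamma_n=\{\tilde\nu\in\mathcal L_{[a,b],\mathcal X}:\ \KLinf(\tilde\nu,\mu_2+\delta)\ge \log x_T/n\}.
\]
The finite-sample Sanov bound gives
\[
\mathbb P(\hat\nu_{1,n}\in\Gamma_n)\le (n+1)^{s}\exp\Big(-n\inf_{\tilde\nu\in\Gamma_n}\KL(\tilde\nu,\nu_1)\Big).
\]
Define
\[
\rho_\delta=\inf_{\tilde\nu:\,m(\tilde\nu)\le\mu_2+\delta}\frac{\KL(\tilde\nu,\nu_1)}{\KLinf(\tilde\nu,\mu_2+\delta)}.
\]
For any $\tilde\nu\in\Gamma_n$ with positive $\KLinf(\tilde\nu,\mu_2+\delta)$ (which forces the mean below $\mu_2+\delta$), we have $\KL(\tilde\nu,\nu_1)\ge \rho_\delta\,\KLinf(\tilde\nu,\mu_2+\delta)\ge \rho_\delta\log x_T/n$. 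Hence each term is at most $(n+1)^s x_T^{-\rho_\delta}$. The union over $n\le C\log x_T$ costs only a polylog factor, so $\mathbb P(G)\le x_T^{-\rho_\delta+o(1)}$.

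Combining the three pieces gives the limsup bound $-\rho_\delta$. Letting $\delta\downarrow0$ should recover the infimum in the statement. That step needs continuity of $\rho_\delta$ as $\delta\to0$, which I would get from continuity of $\KLinf(\tilde\nu,\cdot)$ in its second argument on the finite simplex. The strict-versus-weak mean constraint should not matter, by continuity.

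The step I expect to be the main obstacle is the Sanov step. The difficulty is making the ratio bound uniform: the infimum in the exponent must be taken over the whole region, and the boundary cases where $\KLinf(\tilde\nu,\mu_2+\delta)$ vanishes must be handled so that $\rho_\delta$ is correctly identified. I also need to ensure the factor $(n+1)^s$ and the union over $n$ contribute only $x_T^{o(1)}$. If the direct reduction from $G$ to $\log x_T$ proves too lossy, I would instead use the threshold $\log t$ together with a peeling argument over $t$.
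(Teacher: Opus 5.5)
Your proposal is correct and follows essentially the paper's route. Both arguments split the event into an arm-2 overestimation event at $x_T$ samples (via Assumption~\ref{prop-2}) and an arm-1 underestimation event below $\mu_2+\delta$, then take a union over the arm-1 sample count using the finite-alphabet Sanov bound and the ratio inequality $\mathrm{KL}(\tilde\nu,\nu_1)\ge\rho_\delta\,\KLinf(\tilde\nu,\mu_2+\delta)$, and finally let $\delta\downarrow0$.

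Your Hoeffding cutoff at $n\ge C\log x_T$ is a cleaner substitute for the paper's $s_T$ split, in which the minimizer $\nu^*$, and hence $s_T$, implicitly depends on $m$. The $\delta\downarrow0$ continuity step you flag is one the paper skips. It needs a bound uniform in $\tilde\nu$, not just pointwise continuity, and it goes through using $\KLinf(\tilde\nu,x')\le\KLinf(\tilde\nu,x)+\log\frac{\max\mathcal X-x}{\max\mathcal X-x'}$ for $x<x'$, together with a Pinsker lower bound on $\mathrm{KL}(\tilde\nu,\nu_1)$.
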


\begin{proof}[Proof of Lemma~\ref{lem:sanov_lemma}]
Consider a 2-arm multi-armed bandit problem with environment ${\boldsymbol{\nu}} = (\nu_1,\nu_2)$. We consider $\mu_1 > \mu_2$ without loss of generality. Let's take $\tau_2(m)$ to denote the time when arm 2 is played for the $m^{th}$ time. As the exploration function is not arm-dependent, we drop the subscript $a$ and simply denote it by $f(t)$ for the rest of the proof.  Now for $\delta \in (0,\mu_1-\mu_2)$,
\begin{align*}
\mathbb{P}_{\nu}^{\pi}\left( N_2(T) > x_T \right) 
&\leq \mathbb{P}_{\boldsymbol{\nu}}^{\pi}( \exists\, t \in (\tau_2(x_T), T] \text{ s.t. } \{U_{1}(N_1(t-1), t-1) \leq U_{2}(N_2(t-1), t-1)\} \\
&\qquad \cap \{N_2(t-1) = x_T\}) \\
&\leq \mathbb{P}_{\nu}^{\pi}\left( \exists\, t \in (x_T, T] \text{ s.t. } U_{1}(N_1(t-1), x_T) \leq U_{2}(x_T, T) \right) \\
&\leq \underbrace{\mathbb{P}_{\nu}^{\pi}\left( \exists\, t \in (x_T, T] \text{ s.t. } U_{1}(N_1(t-1), x_T) \leq \mu_2 + \delta \right)}_\text{A} \\
&\quad + \underbrace{\mathbb{P}_{\nu}^{\pi}\left( U_{2}(x_T, T) > \mu_2 +\delta\right)}_\text{B}.
\end{align*}

\noindent\textbf{Controlling A:}
For this, we use Sanov's theorem~\citep{dembo2010large,Csiszar2006Sanov}, which is stated below for completeness.

\begin{theorem}[Sanov's Theorem]
    Let $\mathcal{P}(\Sigma)$ denote the class of distributions over an underlying set $\Sigma$, and $\mathcal{T} \subset \mathcal{P}(\Sigma)$ be a subset of distribution with $\mathcal{T}^0$ and $\bar{\mathcal{T}}$ denoting the interior and closure of $\mathcal{T}$ respectively. Now $(X_n)_{n\in \mathbb{N}}$ be a sequence of i.i.d random variables from drawn from a distribution $\nu \in \mathcal{P}(\Sigma)$. The sequence of the empirical distributions $(\hat{\nu}_n)_{n \in \mathbb{N}}$ satisfy the large deviation principle with rate function $\mathrm{KL}(.,\nu)$ as follows.
    \begin{equation*}
        - \inf_{\nu' \in \mathcal{T}^0} \mathrm{KL}(\nu',\nu) \leq  \liminf_{n \rightarrow \infty} \frac{1}{n} \log \mathbb{P}(\hat{\nu}_n \in \mathcal{T}) \leq \limsup_{n \rightarrow \infty} \frac{1}{n} \log \mathbb{P}(\hat{\nu}_n \in \mathcal{T}) \leq - \inf_{\nu' \in \bar{\mathcal{T}}} \mathrm{KL}(\nu',\nu).
    \end{equation*}
\end{theorem}
The above theorem represents an asymptotic result. However, when $\Sigma$ is a finite set, we get an exact finite sample result with is given in the following equation:
\begin{align}\label{eq-94}
    \mathbb{P}(\hat{\nu}_n \in \mathcal{T}) \leq (n+1)^{|\Sigma|} \exp\{ - n\inf_{\nu' \in \bar{\mathcal{T}}} \mathrm{KL}(\nu',\nu) \}.
\end{align}
In order to apply Equation~\eqref{eq-94}, consider a set $\mathcal{V}$ such that, $\mathcal{V} = \{P \in \mathcal{L}_{[a,b],\mathcal{X}}: \KLinf(P,\mu_2 + \delta) \geq \frac{f(x_T)}{m}\}$. Recall, $\hat{\nu}_{a,n}$ denote the empirical distribution of arm $a$ with $n$ samples.  Thus, we can further bound the term $\text{A}$ as follows. 
\begin{align*}
\text{A} &= \mathbb{P}_{\nu}^{\pi}\left( \exists\, t \in (x_T, T] \text{ s.t. } U_{1}(N_1(t-1), x_T) \leq \mu_2 + \delta \right) \\
&\leq \mathbb{P}_{\nu}^{\pi}\left( \exists\, m \in \mathbb{N} \text{ s.t. } U_{1}(m, x_T) \leq \mu_2 + \delta \right) \\
&\leq \sum^\infty_{m=1} \mathbb{P}_{\nu}^{\pi}\left(  \mathrm{KL}_{\inf}(\hat{\nu}_{1,m},\mu_2 + \delta) \geq \frac{f(x_T)}{m} \right) \label{eq-98} \\
&=  \sum^\infty_{m=1} \mathbb{P}_{\nu}^{\pi}\left(\hat{\nu}_{1,m} \in \mathcal{V} \right) \\
& {\leq}  \sum^\infty_{m=1} \exp\left[- m \inf_{\nu' \in \bar{\mathcal{V}}} \mathrm{KL}(\nu', \nu_1) + o(m) \right], \tag{From~\eqref{eq-94}}
\end{align*} 
where $o(m)=\log(1+m)^s$. Note that $\mathcal{V}$ is closed in the topology of weak convergence. Further, $\mathrm{KL}(\cdot,\cdot)$ is lower-semicontinuous in the same topology. Hence, there exists $\nu^* \in \mathcal{V}$ such that $ \mathrm{KL}(\nu^*, \nu_1)=  \inf_{\nu' \in \bar{\mathcal{V}}} \mathrm{KL}(\nu', \nu_1)$. Now we define $s_T$ and $C_\nu$ as follows.
\[
s_T  = \frac{2f(x_T)}{\mathrm{KL}(\nu^*,\nu_1)}C_\nu \hspace{1em} \text{ such that } \hspace{1em} C_\nu = \underset{\substack{\widetilde{\nu}_1 \in  \mathcal L_{[a,b],\mathcal{X}}\\:\mathbb{E}[\widetilde{\nu}_1]\leq \mu_2+\delta}}{\inf} \frac{\mathrm{KL}(\widetilde{\nu}_1,\nu_1)}{\mathrm{KL}_{\inf}(\widetilde{\nu}_1,\mu_2+\delta)}.
\]
Notice that $C_\nu \geq 1$ because by definition $\KLinf(\widetilde{\nu}_1,\mu_2+\delta) \leq \mathrm{KL}(\widetilde{\nu}_1,\nu_1)$. Also, as $\nu^* \in \mathcal{V}$, by definition of $\mathcal{V}$, $\KLinf(\nu^*,\mu_2+\delta) \geq f(x_T)/m$. Thus for $m > s_T$, 
\[\frac{1}{2}\mathrm{KL}(\nu^*,\nu_1) \geq \frac{f(x_T)}{m}C_\nu.\]

Now the term $\text{A}$ can be upper-bounded as follows.
\begin{align*}
    \text{A}  &\leq \sum^\infty_{m=1} \exp\left[- m  \mathrm{KL}(\nu^*, \nu_1) + o(m)\right] \\
    &= \sum^{s_T}_{m=1} \exp\left[-m \mathrm{KL}(\nu^*,\nu_1)+ o(m)\right] + \sum^{\infty}_{m=s_T+1} \exp\left[-m \mathrm{KL}(\nu^*,\nu_1)+ o(m)\right] \\
    &= \sum^{s_T}_{m=1} \exp\left[-m \mathrm{KL}_{\inf}(\nu^*,\mu_2+\delta) \cdot \frac{\mathrm{KL}(\nu^*,\nu_1)}{\mathrm{KL}_{\inf}(\nu^*,\mu_2+\delta)}+ o(m)\right] \\
    &\hspace{6cm}+ \sum^{\infty}_{m=s_T+1} \exp\left[-m \mathrm{KL}(\nu^*,\nu_1)+ o(m)\right] \\
    &\leq \sum^{s_T}_{m=1} \exp\left[- f(x_T) \cdot \underset{\substack{\widetilde{\nu}_1 \in  \mathcal L_{[a,b],\mathcal{X}}\\:\mathbb{E}[\widetilde{\nu}_1]\leq \mu_2+\delta}}{\inf} \frac{\mathrm{KL}(\widetilde{\nu}_1,\nu_1)}{\mathrm{KL}_{\inf}(\widetilde{\nu}_1,\mu_2+\delta)} + o(m)\right] \\
    &\hspace{6cm}+ \sum^{\infty}_{m=s_T+1} \exp\left[-m \mathrm{KL}(\nu^*,\nu_1)+ o(m)\right] \\
    &\leq \sum^{s_T}_{m=1} \exp\left[- f(x_T) \cdot C_\nu + o(m)\right] + \sum^{\infty}_{m=s_T+1} \exp\left[-m \mathrm{KL}(\nu^*,\nu_1) + o(m)\right] \\
    &\leq \sum^{s_T}_{m=1} \exp\left[- f(x_T) \cdot C_\nu \right] \exp[ o(m)] + \sum^{\infty}_{m=s_T+1} \exp\left[-f(x_T) C_\nu - \frac{m}{2} \cdot \mathrm{KL}(\nu^*,\nu_1) + o(m)\right] \\
    &= \exp\left[- f(x_T) \cdot C_\nu \right] \left( \sum^{s_T}_{m=1} \exp[o(m)] + \sum^{\infty}_{m=s_T+1} \exp\left[-  \frac{m}{2} \cdot \mathrm{KL}(\nu^*,\nu_1) + o(m) \right] \right).
\end{align*}

\noindent\ \textbf{Controlling the second term B:}
The term $\text{B}$ can be upper bounded using Assumption~\ref{prop-2} with  $d= \mathrm{KL}_{\inf}(\nu_2,\mu_2+\delta) - \frac{f(T)}{x_T}$.  Note that $\frac{f(T)}{x_T} $ decreases as $T$ increases. Thus, for all sufficiently large \(T\), we have \(d>0\) and for any constant \(c_\nu>0\),
\begin{align*}
\text{B}  &= \mathbb{P}_{\nu}^{\pi}\left(  U_{2}(x_T, T) > \mu_2 + \delta \right) \\
&= \mathbb{P}_{\nu}^{\pi}\left( \mathrm{KL}_{\inf}(\hat{\nu}_{2,x_T}, \mu_2+\delta ) \leq \frac{f(T)}{x_T} \right) \\
&= \mathbb{P}_{\nu}^{\pi}\left( \mathrm{KL}_{\inf}(\hat{\nu}_{2,x_T}, \mu_2+\delta) \leq \mathrm{KL}_{\inf}(\nu_2, \mu_2+\delta) - \left(\mathrm{KL}_{\inf}(\nu_2, \mu_2+\delta) - \frac{f(T)}{x_T} \right) \right) \\
&\leq \exp \left[- x_T c_\nu \left(\mathrm{KL}_{\inf}(\nu_2,\mu_2+\delta) - \frac{f(T)}{x_T}\right)^2\right].
\end{align*}

Finally, summing the contributions from the terms \(\text{A}\) and \(\text{B}\) and taking the logarithm, we obtain the following bound:

\begin{align*}
    \log(\text{A} + \text{B}) &\leq 
    - f(x_T) \cdot C_\nu \\
    &\quad +
    \log \Bigg(\underbrace{
     \sum^{s_T}_{m=1} \exp[o(m)] + \sum^{\infty}_{m=s_T+1} \exp\left[-  \tfrac{m}{2} \cdot \mathrm{KL}(\nu^*,\nu) + o(m) \right]} \\
    &\qquad \qquad + \underbrace{\exp \left[- x_T c_\nu 
    \left(\mathrm{KL}_{\inf}(\nu_2,\mu_2+\delta) - \tfrac{f(T)}{x_T}\right)^2 + f(x_T) C_\nu \right]}_\text{G}\Bigg). \numberthis \label{eq:sanov_lo}
\end{align*}

For distributions in $\mathcal L_{[a,b],\mathcal X}$, $o(m)=\log(m+1)^s$. Also recall that $f(T) = O(\log T)$ and $x_T = O(\log^{1+\gamma}(T))$. Using these we have,

\begin{align*}
    \frac{\log \text{G}}{f(x_T)} &= \frac{1}{f(x_T)} \log \Bigg( \sum^{s_T}_{m=1} \exp[o(m)] + \sum^{\infty}_{m=s_T+1} \exp\left[-  \tfrac{m}{2} \cdot \mathrm{KL}(\nu^*,\nu) + o (m)\right] \\
    &\hspace{2cm} 
    + \exp \left[-x_T \cdot c_\nu\left(\mathrm{KL}_{\inf}( \nu_2,\mu_1) - \frac{f(T)}{x_T}\right)^2 + C_\nu f(x_T)\right] \Bigg) \\
    &= \frac{1}{f(x_T)} \log \Bigg( \sum^{s_T}_{m=1} (m+1)^s + \sum^{\infty}_{m=s_T+1} 
    \exp\left[-  \tfrac{m}{2} \cdot \mathrm{KL}(\nu^*,\nu) + s\log(m+1)\right] \\
    &\hspace{2cm} 
    + \exp \left[-x_T \cdot c_\nu\left(\mathrm{KL}_{\inf}( \nu_2,\mu_1) - \frac{f(T)}{x_T}\right)^2 + C_\nu f(x_T)\right] \nonumber \Bigg) \\
    &\leq \frac{1}{f(x_T)} \log \Bigg( s_T (s_T+1)^s + O(\exp(-(s_T+1))) \\
    &\hspace{2cm} 
    + \exp \left[-x_T \cdot c_\nu\left(\mathrm{KL}_{\inf}( \nu_2,\mu_1) - \frac{f(T)}{x_T}\right)^2 + C_\nu f(x_T)\right] \nonumber \Bigg) \\
    &\leq \frac{1}{f(x_T)} \log \Bigg(O(f^{s+1}(x_T)) + O(\exp(-f(x_T))) \\
    &\quad
    +\exp \left(-x_T \cdot c_\nu \mathrm{KL}^2_{\inf}( \nu_2,\mu_1) + 2 c_\nu \mathrm{KL}_{\inf}(\nu_2,\mu_1)f(T) - c_\nu f^{1-\gamma}(T) + C_\nu f(x_T)\right) \Bigg) \numberthis\label{eq-24}.  \\
    \end{align*}

Note that as ${T \to \infty}$ the second and third term inside the log in Equation~\eqref{eq-24} go to zero and finally we get \[\limsup_{T\to \infty} \frac{\log G}{f(x_T)} \to \frac{\log O(f^{s+1}(x_T))}{f(x_T)}  \to 0.\]

Finally using this in Equation~\eqref{eq:sanov_lo}, taking $\delta \downarrow 0$, we get the desired result as follows.
\[
\limsup_{T\to\infty}
    \frac{\log \mathbb{P}_{{\boldsymbol{\nu}}}^\pi\!\left(N_{2}(T) > x_T\right)}{\log x_T}
    \;\le\; -
    \inf_{\substack{\widetilde{\nu}_1 \in \mathcal{L}_{[a,b],\mathcal{X}} \\ : \mathbb{E}[\widetilde{\nu}_1] \leq \mu_2 }} \frac{\mathrm{KL}(\widetilde\nu_1,\nu_1)}{\mathrm{KL}_{\inf}(\widetilde\nu_1,\mu_2)},
\]
completing the proof.
\end{proof}
\noindent \textbf{Multi-arm Setting:} The extension to the multi-arm setting follows along lines analogous to the same extension in the proof of Theorem~\ref{theorem-3} in Section~\ref{appendix:proof_theo3}. Finally, invoking Lemma~\ref{lem:A-18}, we obtain the following result for the deviation family  
\(\mathcal{D}_\gamma(T)=\bigl[\log^{1+\gamma}(T),\, (1-\gamma)T \bigr]\) with \(\gamma \in (0,1)\), as stated below.

 \begin{equation*}\label{eq:finite-model-result}
        \limsup_{T \to \infty}
    \inf_{x \in \mathcal{D}_\gamma(T)}
    \frac{\log \mathbb{P}_{{\boldsymbol{\nu}}}^\pi\!\left(N_{i}(T) > x\right)}{\log x}
    \leq - \sum_{j=1}^{i-1}
    \inf_{\substack{\widetilde{\nu} \in \mathcal{L}_{[a,b],\mathcal{X}}:\\ m(\widetilde{\nu}) < \mu_i}}
    \frac{\mathrm{KL}\!\left(\widetilde{\nu}, \nu_j\right)}
    {\mathrm{KL}_{\inf}\!\left(\widetilde{\nu}, \mu_i\right)}.
\end{equation*}
Now using the optimal regret tail lower bound from Equation~\eqref{eq:tail-lower-bound} we finally get the tight characterization as follows.
 \begin{equation*}\label{eq:finite-model-result}
        \lim_{T \to \infty}
    \inf_{x \in \mathcal{D}_\gamma(T)}
    \frac{\log \mathbb{P}_{{\boldsymbol{\nu}}}^\pi\!\left(N_{i}(T) > x\right)}{\log x}
    = - \sum_{j=1}^{i-1}
    \inf_{\substack{\widetilde{\nu} \in \mathcal{L}_{[a,b],\mathcal{X}}:\\ m(\widetilde{\nu}) < \mu_i}}
    \frac{\mathrm{KL}\!\left(\widetilde{\nu}, \nu_j\right)}
    {\mathrm{KL}_{\inf}\!\left(\widetilde{\nu}, \mu_i\right)}.
\end{equation*}

\section{Relevant literature}\label{sec:lit}

Expected regret minimization in stochastic multi-armed bandits has a long and rich history, dating back to early work by~\citet{Thomp1933} in clinical trials and \citet{robbins1952}. A foundational result was established by~\citet{LaiRobbins85}, who derived a lower bound on the expected cumulative regret for parametric reward models. This result shows that the expected regret must grow at least logarithmically in the time horizon $T$, with a leading constant determined by Kullback--Leibler (KL) divergences between the optimal and suboptimal arms. This was later generalized by~\citet{BURNETAS1996}, leading to the notion of asymptotically optimal algorithms beyond SPEFs.

Several algorithms are now known to achieve this lower bound. Among these, the KL-UCB algorithm of~\citet{cappe2013kullback} is a prominent example. While UCB-type algorithms have long been studied~\citep{Lai1987Adaptive,agrawal1995sample,Auer2002}, \citet{cappe2013kullback} provided a finite-time analysis of KL-UCB and showed its asymptotic optimality for SPEF, as well as for finitely-support models via empirical variants of $\KLinf$, the $\operatorname{KL}$-projection function. They also conjectured optimality for general bounded rewards, which was later established by~\citet{agrawal21heavytailed}. Since then, much of the literature has focused on designing different asymptotically optimal algorithms for various distribution families~\citep{HondaBounded10,honda2011asymptotically,SAgrawal2011TS1,SAgrawal2012TS2,kaufmann2012TS,cappe2013kullback,bubeck2013heavytail,honda2015SemiBounded,agrawal21heavytailed}. We refer the reader to~\citet{bubeck2012Survey,lattimore2020bandit} for an overview.

Beyond expected regret, there has been sustained interest in understanding finer properties of the regret distribution. Early works~\citep{Audibert2009UCBV,salomon2011deviations} studied deviation and concentration properties of regret. More recently, a line of work has focused on the distributional behavior of regret, including worst-case guarantees~\citep{KalvitZeevi2021WorstCaseBandits}. In parallel, \citet{FanGlynn2022TypicalBehavior} analyzed the typical behavior of regret, establishing strong laws and central limit theorems for bandit algorithms.

A complementary perspective, focusing on rare but significant deviations, was developed by~\citet{FanGlynn2024Fragility}. They show that even asymptotically optimal algorithms can exhibit heavy regret tails, leading to large regret with non-negligible probability. This phenomenon highlights a fundamental limitation of expectation-based optimality and motivates the study of regret tail behavior.
\end{document}